\newtheorem*{theorem}{Theorem}
\begin{document}

\preprint{CHIBA-EP-246, 2021.05.21}

\title{Effects of a quark chemical potential on the analytic structure of the gluon propagator}

\author{Yui Hayashi}
\email{yhayashi@chiba-u.jp}
\affiliation{
Department of Physics, Graduate School of Science and Engineering, Chiba University, Chiba 263-8522, Japan
}

\author{Kei-Ichi Kondo}
\email{kondok@faculty.chiba-u.jp}
\affiliation{
Department of Physics, Graduate School of Science and Engineering, Chiba University, Chiba 263-8522, Japan
}
\affiliation{
Department of Physics, Graduate School of Science, Chiba University, Chiba 263-8522, Japan
}


\begin{abstract}
We perform complex analyses of the gluon propagator at nonzero quark chemical potential in the long-wavelength limit, using an effective model with a gluon mass term of the Landau-gauge Yang-Mills theory, which is a Landau-gauge limit of the Curci-Ferrari model with quantum corrections being included within the one-loop level. We mainly investigate complex poles of the gluon propagator, which could be relevant to confinement.
Around typical values of the model parameters, we show that the gluon propagator has one or two pairs of complex conjugate poles depending on the value of the chemical potential. In addition to a pair similar to that in the case of zero chemical potential, a new pair appears near the real axis when the chemical potential is roughly between the effective quark mass and the effective gluon mass of the model. We discuss possible interpretations of these poles. Additionally, we prove the uniqueness of analytic continuation of the Matsubara propagator to a class of functions that vanish at infinity and are holomorphic except for a finite number of complex poles and singularities on the real axis. 
\end{abstract}

\maketitle

\section{INTRODUCTION}

For a long time, it has been expected that quark degrees of freedom would dominate in a highly dense matter of quantum chromodynamics (QCD) rather than hadrons, although details of the phase structure are still unclear mainly due to the sign-problem \cite{denseQCD}.
Studying the analytic structure of the gluon propagator is of importance to this end since this structure provides information on the in-medium behavior, e.g., whether or not a quasiparticle description is appropriate.
We thus explore the analytic structure of the gluon propagator in this article.

The main difficulty in a continuum approach for the quark matter is the breakdown of the perturbation theory in the infrared QCD.
Indeed, perturbative calculations of dense QCD matter suggest that the quark matter is already strongly correlated at the quark chemical potential $\mu_q \lesssim 1$ GeV \cite{KRV2010}.
Therefore, a method valid in infrared is required to study relatively low density region of the quark matter.

About ten years ago, an effective model of the Landau-gauge Yang-Mills theory has been proposed \cite{TW10,TW11} to understand recent numerical lattice results that support the decoupling (massivelike) solution of the Dyson-Schwinger equation (DSE) \cite{decoupling-analytical, decoupling-lattice}.
This model consists of the Faddeev-Popov Lagrangian and the simple gluon mass term, i.e., the Landau gauge limit of the Curci-Ferrari model \cite{CF76b}, which we call \textit{the massive Yang-Mills model}. This mass deformation could be a consequence of generating the dimension-two gluon condensate \cite{Schaden-Kondo-Shinohara-Wershinke,Boucaud2000,Gubarev-Zakharov,Verschelde2001,BG2003} or avoiding the Gribov ambiguity \cite{Gribov78,ST12}.
The massive Yang-Mills model has the modified Becchi-Rouet-Stora-Tyutin (BRST) symmetry and is multiplicatively renormalizable to be proved through the modified Slavnov-Taylor identities (at all orders of the perturbation theory) \cite{CF76b}.
Moreover, there exists the ``infrared safe'' renormalization scheme respecting the nonrenormalization theorems \cite{Taylor71,Non-mass-ren}, in which
the running gauge coupling constant $g$ is finite at all scales on some renormalization group (RG) flows \cite{TW11,RSTW17,ST12}.

This model provides the gluon and ghost propagators that agree strikingly with the numerical lattice results just in the one-loop level \cite{TW10,TW11}.
The three-point functions \cite{PTW13} and two-point correlation functions at finite temperature \cite{RSTW14} were compared to the numerical lattice results with good accordance. Furthermore, the two-loop corrections improve the agreement for the gluon and ghost propagators \cite{GPRT19}.
Therefore, the effective mass captures some nonperturbative aspects of the Yang-Mills theory.

For unquenched cases with the number of quark flavors $N_F = 2,~2+1+1$, the massive Yang-Mills model with dynamical quarks gives the gluon and ghost propagators consistent with the numerical lattice results as well \cite{PTW14}. For the quark sector, higher loop corrections are important in this model \cite{PTW14, PTW15, PRSTW17}. Also, QCD phases with heavy quarks have been studied in a similar model in the Landau-DeWitt gauge ~\cite{Finite-temperature-Landau-dewitt}.
Despite the shortcoming of the massive Yang-Mills model for describing the quark sector, this model will be useful for the analyses of the gluon propagator.

One might worry about the absence of the nilpotent BRST symmetry. The massive Yang-Mills model certainly suffers from the physical nonunitarity \cite{CF76b, Kondo13} as a consistent theory. However, as this model gives the well-approximating propagator given by a mass-deformation that has some background refereed above, we can still consider the massive Yang-Mills model as a model for describing the gluon and ghost propagators.
Therefore, it is interesting to investigate the analytic structure of the gluon propagator of this model.

In \cite{Suenaga-Kojo19}, the gluon propagator of the massive Yang-Mills model at finite chemical potential (and zero temperature) obtained in the vanishing momentum renormalization scheme, which is not infrared safe, has been compared with numerical lattice data \cite{BHMS19} for the gauge group $SU(2)$, and the propagators at finite temperature and density have been compared in \cite{KS2021}.
While the singlet diquark gap can improve the consistency with the lattice results, the agreement with the lattice results is not quite satisfactory for parameters $(g,M)$ that are independent of chemical potential.
If one enables the gluon mass parameter $M$ to depend on the chemical potential, one can obtain a fair agreement between the massive Yang-Mills model and the numerical lattice results.
Therefore, although this model may lack some important aspects, it is still worthwhile studying the analytic structure of the gluon propagator at a finite chemical potential by utilizing the massive Yang-Mills model with various model parameters.

In the vacuum case, i.e., vanishing chemical potential $\mu_q = 0$, we have investigated the analytic structures of the gluon, quark, and ghost propagators and revealed that the gluon and quark propagators have one pair of complex conjugate poles while the ghost propagator has no complex poles \cite{HK2018,KWHMS19,HK2020}.  Other several models and reconstruction methods also predict such complex poles of the gluon propagator, e.g. \cite{Gribov78,Zwanziger90,DGSVV2008,Siringo16a,Siringo16b,Stingl86,Stingl96,HKRSW, BT2019, Falcao:2020vyr}. The DSE with the ray technique had provided the gluon propagator holomorphic except for timelike momenta \cite{SFK12}, but the recent study \cite{Fischer-Huber} has updated this conclusion and strongly suggested a singularity on the complex momentum plane.

Complex poles invalidate the K\"all\'en-Lehmann spectral representation \cite{spectral_repr_UKKL} and might correspond to unphysical degrees of freedom in an indefinite metric state space \cite{Nakanishi72suppl}. Therefore, the complex poles represent deviations from observable particles and are expected to be related to the confinement mechanism. For example, the connection between complex poles of the fermion propagator and confining potential in three-dimensional quantum electrodynamics has been discussed in \cite{Maris}. Incidentally, another generalization of the spectral representation taking unphysical degrees of freedom into account is proposed in \cite{Lowdon}.

In this article, we investigate the analytic structure of the in-medium gluon propagator at finite quark chemical potential $\mu_q$ by employing the massive Yang-Mills model with quantum corrections being included within the one-loop level. Since we are interested in the long-distance behavior and analytic structure of the gluon propagator on the complex frequency plane, we perform complex analyses on the gluon propagator in \textit{the long-wavelength limit} $\vec{k} \rightarrow 0$. In addition, we consider the uniqueness of the analytic continuation in the presence of complex poles, since we use the Matsubara propagator in the low-temperature limit $T \rightarrow + 0$ and the analytic continuation is in principle not unique before taking the limit.

This article is organized as follows. In the next section, the definition of complex poles of an in-medium propagator and the method for counting complex poles are presented. A proof of the uniqueness in a class of functions having a finite number of complex poles is provided in Appendix A. The massive Yang-Mills model and its one-loop expressions are presented in Sec. III. We detail the vacuum part of the one-loop expressions in Appendix B. In Sec. IV and Appendix C, we determine the number of complex poles and their locations in the space of the model parameters and the spectral function at a specific set of model parameters.  It turns out that the gluon propagator has one pair of almost real poles in addition to the other pair of complex conjugate poles similar to the vacuum ones at intermediate quark chemical potential. In Sec.~V, we discuss possible interpretations of these almost real poles, an improvement of the results, estimates for slightly large $\mu_q$, and infrared problems at finite temperature.
In Sec.~VI, a summary of these findings and future prospects are given.

\section{Complex poles of in-medium propagators}

In this section, we define complex poles of propagators in medium. Then, we introduce a method to count the number of complex poles, which is utilized in the subsequent sections.

\subsection{Definition}

 \begin{figure}[tb]
  \begin{center}
   \includegraphics[width=\linewidth]{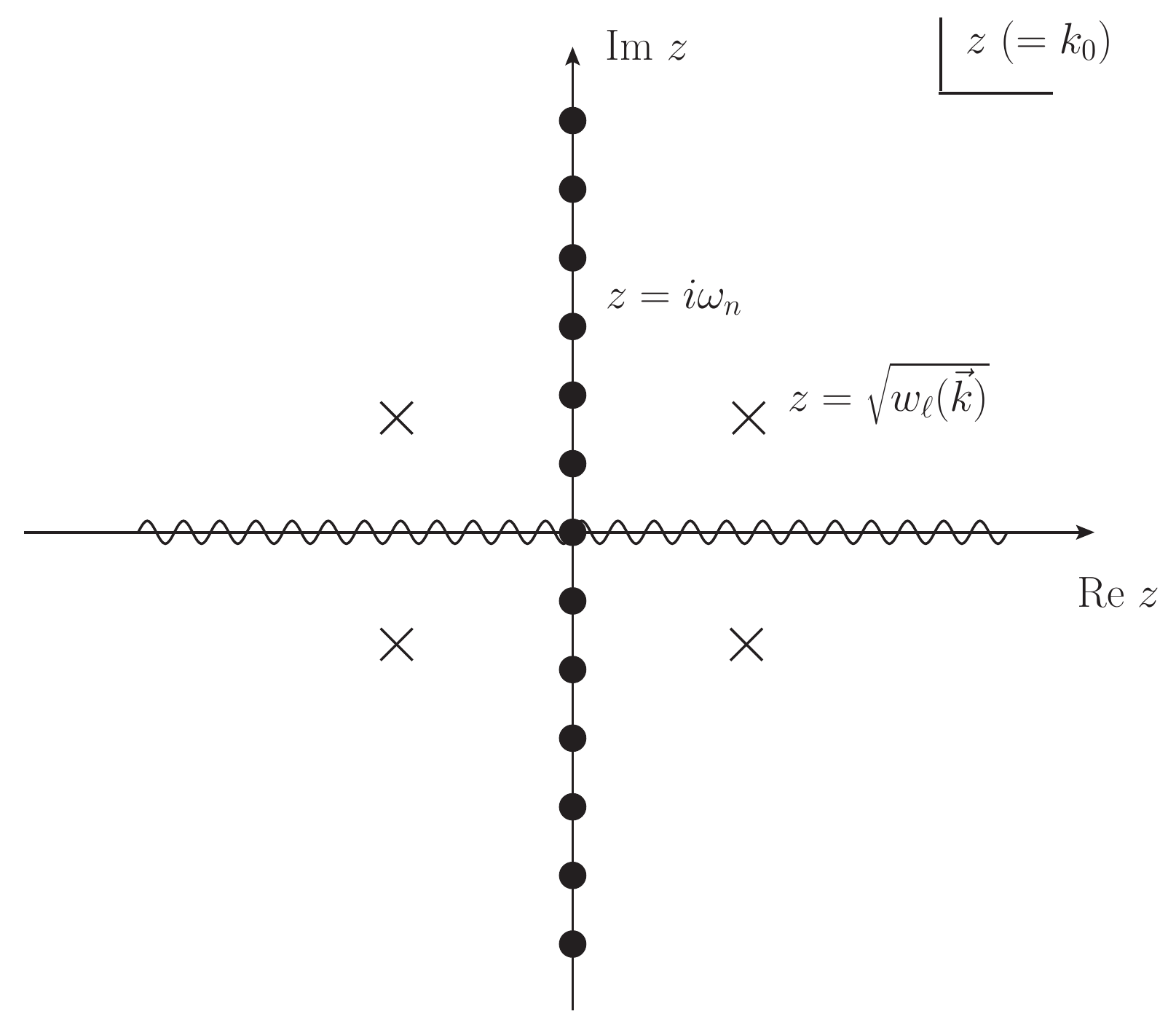}
  \end{center}
   \caption{Schematic picture of singularities on the complex $z (=k_0)$ plane. We analytically continue a Matsubara propagator $D(z = i \omega_n, \vec{k})$ defined at the Matsubara frequencies $z = i \omega_n$ (shown as the dots) to $D(z, \vec{k})$ on the complex $z$ plane. The two sets of complex conjugate poles in the $z$ plane represent a pair of complex conjugate poles with respect to $z^2$ in (\ref{eq:generalized-spec-repr}).}
    \label{fig:section2_Matsubara}
\end{figure}
In medium, we compute a Matsubara propagator $D(i \omega_n, \vec{k})$ within the imaginary-time formalism, where $\omega_n$ is the Matsubara frequency and $\vec{k}$ is the spatial momentum. We consider the analytic continuation $D(z, \vec{k})$ on the complex $z$ plane for a fixed $\vec{k}$ from the Matsubara frequencies on the imaginary axis $z = i \omega_n$.
This provides information on the spectrum and is useful for studying linear response, in which the retarded propagator, namely the propagator analytically continued to the real axis from the upper-half plane, plays an important role \cite{Kapusta-Gale}.

For a field describing a physical observable particle, the usual spectral representation holds. The spectral condition forces analytically-continued Matsubara propagator $D(z, \vec{k})$ to have singularities only on real axis $z \in \mathbb{R}$.

However, the spectral condition may be violated for confined degrees of freedom, since not all states have to be physical. Thus, we can consider the possibility of complex spectra, which need not be excluded in an indefinite metric state space \cite{Nakanishi72suppl}. If a state with complex energy exists, this should correspond to a confined state. Further formal aspects will be discussed elsewhere \cite{HK2021}.

Here, we assume the following generalized spectral representation allowing complex poles for the gluon propagator $D(z,\vec{k})$, which is a propagator obtained by the analytic continuation from the Matsubara propagator $D(i \omega_n,\vec{k})$ defined at points on the pure imaginary axis of the complex $z$ plane:
\begin{align}
    D(z,\vec{k}) &= \int_0 ^\infty d \sigma^2 \frac{\rho(\sigma,\vec{k})}{\sigma^2 - z^2} + \sum_{\ell =1}^n \frac{Z_\ell(\vec{k})}{ w_\ell(\vec{k}) - z^2}, \label{eq:generalized-spec-repr} \\
    \rho(\sigma,\vec{k}) &= \frac{1}{\pi} \operatorname{Im} D(\sigma + i \epsilon,\vec{k}),
\end{align}
where $\rho(\sigma,\vec{k})$ is the spectral function, $w_\ell (\vec{k}) $ is a position of a complex pole, and $Z_\ell(\vec{k})$ is its residue for arbitrary but fixed $\vec{k}$.
Figure \ref{fig:section2_Matsubara} illustrates singularities on the complex $z$ plane of $D(z,\vec{k})$.

Notice that, in the vacuum case, there is a one-to-one correspondence between the propagator $D(z,\vec{k})$ analytically continued to the upper-half plane in $z$ and the analytic continuation in the complex $k^2$ plane $\tilde{D}(k^2)$, which has been considered in the previous articles \cite{HK2018, HK2020}, in the sense that $D(z,\vec{k}) = \tilde{D} (z^2 - \vec{k}^2)$.

Since the set of Matsubara frequencies $\{ \omega_n \}$ has no accumulation points, uniqueness of the analytic continuation is an important problem to be proved.
Indeed, there is a well-known theorem saying that the uniqueness holds in a class of functions satisfying (i) $D(z) \rightarrow 0$ as $|z| \rightarrow \infty$ and (ii) $D(z)$ is holomorphic except for the real axis, i.e., these two conditions are sufficient to determine the unique continuation \cite{Baym-Mermin}. Although this theorem cannot be applied to our case due to the existence of complex poles, we can generalize this theorem in a straightforward way. In Appendix A, we present a proof of the uniqueness under the weaker conditions allowing complex poles:
\begin{enumerate}
    \item $D(z) \rightarrow 0$ as $|z| \rightarrow \infty$,
    \item $D(z)$ is holomorphic except for singularities on the real axis and a finite number of complex poles. 
\end{enumerate}
Therefore, the uniqueness of the analytic continuation from the Matsubara propagator is valid in a similar sense even in the presence of complex poles.

Note that complex poles defined here do not correspond to poles of quasiparticles. This is because the complex poles defined here yield poles in both of the upper-half and lower-half planes in $z$. While a quasiparticle pole is in the second Riemann sheet in $z^2$, the complex pole is in the first Riemann sheet.

\subsection{Counting complex poles}

Let us introduce a method to count the number of complex poles based on the argument principle \cite{HK2018, HK2020} to be used in the following sections. We can relate a propagator at real frequencies to complex poles and zeros.

In the vacuum case, we have applied the method to a propagator on the complex $k^2$ plane. For an in-medium propagator, we can take $k^2$ as the squared complex frequency $z^2$. The statement is as follows. 

Suppose that a complex-valued propagator $D(z^2) := D(z,\vec{k})$ with a fixed spatial momentum $\vec{k}$ and its data $\{ D(z^2 = x_n + i \epsilon) \}$ for real frequencies $z$ (namely, $z^2 > 0$) satisfy the following conditions.
\begin{enumerate}
 \item In the limit $|z^2| \rightarrow \infty$, $D(z^2)$ has the same phase as the free propagator, i.e., $\arg (-D(z^2)) \rightarrow  \arg \frac{1}{z^2} $ as $|z^2| \rightarrow \infty$.
 \item In the limit $|z^2| \rightarrow 0$,  $D(z^2 = 0) > 0$.
  \item The sequence $\{z^2 = x_n + i \epsilon \}_{n=0}^N$ is sufficiently dense so that $D(z^2= x + i \epsilon)$ changes its phase at most half-winding ($\pm \pi$) between $x_n+i\epsilon$ and $x_{n+1}+i\epsilon$, i.e., for $n = 0, 1, \cdots, N$,
 \begin{align}
\left| \int_{x_n}^{x_{n+1}} dx \frac{d}{dx} \arg D(x + i \epsilon) \right| < \pi,
\end{align}
where we denote sufficiently small $x_0 = \delta^2 >0$ and sufficiently large $x_{N+1} = \Lambda^2$, on which we will take the limits $\delta^2 \rightarrow +0$ and $\Lambda^2 \rightarrow +\infty$.
\end{enumerate}
Then the winding number, which is the difference between the number of complex zeros ($N_Z$) and poles ($N_P$) with respect to $z^2$, reads
\begin{align}
N_W (C) &= N_Z - N_P \notag \\
&= -1 + 2 \sum_{n=0}^N \frac{1}{2\pi} \operatorname{Arg}\left[ \frac{D(x_{n+1}+i\epsilon)}{D(x_n + i \epsilon)}\right]. 
\end{align}
Thus the number of complex poles $N_P$ is given by
\begin{align}
N_P &= N_Z -  N_W (C) \notag \\
&= N_Z + 1 - 2 \sum_{n=0}^N \frac{1}{2\pi} \operatorname{Arg}\left[ \frac{D(x_{n+1}+i\epsilon)}{D(x_n + i \epsilon)}\right]. \label{eq:winding-timelike}
\end{align}
For details of the derivation, see \cite{HK2020}.
When the three conditions (i), (ii), and (iii) hold, we can numerically compute the number of complex poles $(N_P)$ from the number of zeros ($N_Z$) and data at the real frequencies $\{ D(x_n + i \epsilon) \}$

Throughout this article, $N_P$ denotes the number of complex poles on the $z^2$ plane, i.e., the number of poles on the (upper-)half plane on the $z$ plane, and ``complex conjugate poles'' denote those on the $z^2$ plane. The propagator has $2 N_P$ complex poles on the whole $z$ complex plane.

\section{Model}

In this section, we introduce the massive Yang-Mills model, which is regarded as an effective model of the Landau-gauge Yang-Mills theory, or the Landau-gauge limit of Curci-Ferrari model, and review the one-loop expressions.

\subsection{Massive Yang-Mills model}

The Euclidean Lagrangian of the model at $N$ colors with $N_F$ flavors is given by \cite{TW10,TW11,PTW14}
\begin{align}
{\mathscr L}_{mYM} &= {\mathscr L}_{YM} + {\mathscr L}_{GF} + {\mathscr L}_{FP} + {\mathscr L}_{m}+ {\mathscr L}_{q}, \\
{\mathscr L}_{YM} &= \frac{1}{4} {\mathscr F}^A_{\mu \nu} {\mathscr F}^{A}_{\mu \nu}, \notag \\
{\mathscr L}_{GF} &= i \mathscr{N}^A \partial_\mu {\mathscr A}^A_\mu, \notag \\
{\mathscr L}_{FP} &=  \bar{{\mathscr C}}^A \partial_\mu {\mathscr D}_\mu[{\mathscr A}]^{AB} {\mathscr C}^B \notag \notag \\
&=  \bar{{\mathscr C}}^A \partial_\mu (\partial_\mu {\mathscr C}^A + g_b f^{ABC} {\mathscr A}^B_\mu {\mathscr C}^C), \notag \\
{\mathscr L}_{m} &= \frac{1}{2} M^2_b {\mathscr A}_\mu^A {\mathscr A}_\mu^A, \notag  \\
{\mathscr L}_{q} &= \sum_{i=1}^{N_F} \bar{\psi}_i (\gamma_\mu {\mathscr D}_\mu[{\mathscr A}] + (m_b)_{q,i}) \psi_i \notag \\
&= \sum_{i=1}^{N_F} \bar{\psi}_i (\gamma_\mu (\partial_\mu - i g_b {\mathscr A}_\mu^A t^A) + (m_b)_{q,i}) \psi_i,
\end{align}
where we have introduced the bare gluon, ghost, anti-ghost, Nakanishi-Lautrup, and quark fields denoted by ${\mathscr A}^A_\mu, ~ {\mathscr C}^A , ~ \bar{{\mathscr C}}^A, \mathscr{N}^A,~(A = 1,2,\cdots, N^2-1)$, and $ \psi_i~(i = 1,2, \cdots, N_F)$ respectively, the bare gauge coupling constant $g_b$, the bare gluon mass $M_b$, and the bare quark mass $(m_b)_{q,i}$, while $f^{ABC}(A,B,C = 1,2,\cdots, N^2-1)$ stand for the structure constants with the generators $t^A$ of the fundamental representation of the group $G = SU(N)$.

The renormalization factors $(Z_A,Z_C, Z_{\bar{C}} = Z_C, Z_{\psi_i}),~ Z_g,~ Z_{M^2}, ~ Z_{m_{q,i}}$ for the gluon, ghost, anti-ghost, and quark fields $({\mathscr A}_\mu, {\mathscr C}, \bar{{\mathscr C}}, \psi_i)$, the gauge coupling constant $g$, and the gluon and quark mass parameters $M^2, m_{q,i}$ are introduced respectively as follows:
\begin{align}
{\mathscr A}^\mu &= \sqrt{Z_A} {\mathscr A}_R^\mu, ~ {\mathscr C} = \sqrt{Z_C} {\mathscr C}_R, \notag \\
\bar{{\mathscr C}} &= \sqrt{Z_C} \bar{{\mathscr C}}_R, ~\psi_i =  \sqrt{Z_{\psi_i}}\psi_{R,i}, \notag \\
~g_b &= Z_g g, ~M^2_b = Z_{M^2} M^2, ~ (m_b)_{q,i} = Z_{m_{q,i}} m_{q,i}.
\end{align}
In this article, we consider the two flavor case $N_F = 2$ and employ this model with degenerate quark masses, $m_q := m_{q,i}$, and therefore $Z_{\psi} := Z_{\psi_i}$ and $Z_{m_{q}} := Z_{m_{q,i}}$.
Notice that the quark mass parameter $m_q$ of this model is chosen to fit the propagators obtained from other methods, e.g., numerical lattice results. In particular, the quark mass parameter $m_q$ is nonzero even for massless quarks due to the spontaneous breakdown of the chiral symmetry.

The general tensorial structure of the gluon propagator ${\mathscr D}_{\mu \nu} (k_E)$ reads, from the spatial rotational symmetry and the transversality of the Landau gauge,
\begin{align}
{\mathscr D}_{\mu \nu} (k_E) = D_T(k_E^2) P^T_{\mu \nu} + D_L(k_E^2) P^L_{\mu \nu}, \label{eq:gluon_propagator_tensorial_str}
\end{align}
where $k_E = (k_1,k_2,k_3,k_4) = (\vec{k}, k_4)$ is the Euclidean momentum, $P^T_{\mu \nu}$ and $P^L_{\mu \nu}$ are the transverse and longitudinal projectors respectively, i.e.,
\begin{align}
P^T_{ij} &= \delta_{ij} - \frac{k_i k_j}{\vec{k}^2}, \notag \\
P^T_{4i} &= P^T_{i4} = P^T_{44} = 0~~ (i,j = 1,2,3),
\end{align}
and,
\begin{align}
P^L_{\mu \nu} = P_{\mu \nu} - P^T_{\mu \nu}, ~~ P_{\mu \nu} = \delta_{\mu \nu} - \frac{k_{E,\mu} k_{E,\nu}}{k_E^2}.
\end{align}

We define the vacuum part of the gluon and ghost two-point vertex functions $\Gamma_{{\mathscr A}, vac}^{(2)}, \Gamma_{gh, vac}^{(2)}$ as the zero temperature $T=0$ and the zero chemical potential $\mu = 0$ limit,
\begin{align}
\left. {\mathscr D}_{\mu \nu} (k_E) \right|_{T=\mu=0} = [\Gamma_{{\mathscr A}, vac}^{(2)} (k_E^2) ]^{-1} P_{\mu \nu}, \notag \\
\left. \Delta_{gh} (k_E) \right|_{T=\mu=0} = - [\Gamma_{gh, vac}^{(2)}(k_E^2) ]^{-1},
\end{align}
where $\Delta_{gh}$ is the ghost propagator.                                                                                                                                                                                                                                                                           

As a renormalization scheme, we adopt the ``infrared safe scheme''~\cite{TW11,PTW14} respecting the nonrenormalization theorem $ Z_A Z_C Z_{M^2} = 1$~\cite{Non-mass-ren}. For the gluon and ghost sector, we impose
\begin{align}
 \begin{cases}
 Z_A Z_C Z_{M^2} = 1 \\
 \Gamma_{{\mathscr A}, vac}^{(2)} (k_E^2 = \mu^2) = \mu^2 + M^2\\
 \Gamma_{gh, vac}^{(2)}(k_E^2 = \mu^2)= \mu^2 
 \end{cases}
\label{eq:TWrenomalization}
\end{align}
combined with the Taylor scheme \cite{Taylor71} $Z_g Z_A^{1/2} Z_C = 1$ for the coupling.\footnote{For the quark sector, we put $\Gamma^{(2)}_{s, vac} (k_E^2 = \mu^2) = m_q$ and $\Gamma^{(2)}_{v,vac}(k_E^2 = \mu^2) = 1$, where the quark propagator $\mathcal{S}(k_E) $ is parametrized as $\mathcal{S}^{-1}(k_E) = i \slashed{k}_E \Gamma_{v}^{(2)}(k_E)  + \Gamma_{s}^{(2)}(k_E)$. Note that this choice affects RG-improved results displayed in Fig.~\ref{fig:RG-winding}.}
In this renormalization scheme, it turns out that there exist RG flows on which the running coupling constant is always finite in a whole momentum region, which implies that the perturbation theory is valid to some extent.

\subsection{One-loop expressions}
Here we review the results of one-loop calculations of the in-medium gluon propagator. 

Beforehand, we decompose the vacuum polarization $\Pi_{\mu \nu} (k_E)$ into the vacuum part $\Pi^{vac}_{\mu \nu} (k_E)$ and the matter part $\Pi^{mat}_{\mu \nu} (k_E)$,
\begin{align}
\Pi_{\mu \nu} (k_E)
&= \Pi^{vac}_{\mu \nu} (k_E) + \Pi^{mat}_{\mu \nu} (k_E).
\end{align}
$\Pi^{vac}_{\mu \nu} (k_E)$ had been calculated in \cite{TW10, TW11, PTW14}. For completeness, the vacuum part is presented in Appendix B.

The relation between $\Pi_{\mu \nu} (k_E)$ and ${\mathscr D}_{\mu \nu} (k_E)$ is given by the further decomposition of $\Pi^{mat}_{\mu \nu} (k_E)$ as follows: in general, the spatial rotational symmetry yields
\begin{align}
\Pi^{mat}_{\mu \nu} (k_E) = \Pi_T ^{mat} (k_E^2) P^T_{\mu \nu} + \Pi_L ^{mat} (k_E^2) P^L_{\mu \nu} + \delta \Pi_{\mu \nu},
\end{align}
where the last term $\delta \Pi_{\mu \nu}$ is spanned by the tensorial structures $k_{E,\mu} k_{E,\nu}$ and $(P_{\mu \rho} t_\rho) k_{E,\nu} + (P_{\nu \rho} t_\rho) k_{E,\mu}$ with $t_\mu = (\vec{0},1)$ and does not contribute to the propagator due to the transversality of the Landau gauge, while the vacuum part can be written as 
$\Pi^{vac}_{\mu \nu} (k_E) = \Pi^{vac} (k_E^2) P_{\mu \nu}$.
The gluon propagator is thus of the form (\ref{eq:gluon_propagator_tensorial_str}) with the components of the vacuum polarization:
\begin{align}
{\mathscr D}_{\mu \nu} (k_E) &= 
D_T(k_E^2) P^T_{\mu \nu} + D_L(k_E^2) P^L_{\mu \nu}, \notag \\
D_T(k_E^2) &= \frac{1}{k_E^2 + \Pi^{vac} (k_E^2) + \Pi_T ^{mat} (k_E^2) }, \notag \\
D_L(k_E^2) &= \frac{1}{k_E^2 + \Pi^{vac} (k_E^2) + \Pi_L ^{mat} (k_E^2) }.
\end{align}

The matter part $\Pi^{mat}_{\mu \nu} (k_E)$ at zero temperature $T=0$ and nonzero quark chemical potential $\mu_q >0$ is the quark-loop contribution; for $\mu_q > m_q$, \cite{Kapusta-Gale}
\begin{align}
\Pi^{mat}_{\mu \nu}& (k_E) = \frac{1}{2} \left[ \Pi^{mat}_{\rho \rho} - \frac{k_E^2}{\vec{k}^2} \Pi^{mat}_{44} \right] P^T_{\mu \nu} + \frac{k_E^2}{\vec{k}^2} \Pi^{mat}_{44} P^L_{\mu \nu}, \\
\Pi^{mat}_{\rho \rho} &= 2 \frac{g^2 C(r)}{\pi^2} \operatorname{Re} \int_0^{p_F} \frac{dp p^2}{E_p} \left[ 1- \frac{2m_q^2 - k_E^2}{4 p |\vec{k}|} \ln \left( \frac{R_+}{R_-} \right) \right], \notag \\
\Pi^{mat}_{44} &= \frac{g^2 C(r)}{\pi^2} \operatorname{Re} \int_0^{p_F} \frac{dp p^2}{E_p} \notag \\
&~~~\left[ 1 - \frac{k_E^2+4 E_p^2 + 4 i E_p k_4}{4 p |\vec{k}|} \ln \left( \frac{R_+}{R_-} \right) \right],
\end{align}
where $C(r) = N_F/2,~ p_F = \sqrt{\mu_q^2 - m_q^2},~ E_p =\sqrt{p^2 + m_q^2}$, 
\begin{align}
R_\pm = -k_E^2 + 2i k_4 E_p \pm 2 p |\vec{k}|,
\end{align}
and $\operatorname{Re}$ denotes the real part when $k_4$ is real, namely, $\operatorname{Re} f(i k_4) := \frac{1}{2} (f(ik_4) + f(-ik_4))$ for any function $f(ik_4)$.

Now, since we are interested in complex mass and long-distance behavior, let us take the long-wavelength limit $\vec{k} \rightarrow 0$ symmetrically. This limit reduces technical difficulties on the analytic continuation significantly.

In the long-wavelength limit $\vec{k} \rightarrow 0$, we have
\begin{align}
P_{ij} &= \delta_{ij},~~ P_{4i} = P_{i4} = P_{44} = 0, ~~(i,j = 1,2,3) \notag \\
P_{\mu \nu}^T &\rightarrow \frac{2}{3} P_{\mu \nu}, ~~ P_{\mu \nu}^L \rightarrow \frac{1}{3} P_{\mu \nu}, \notag \\
& \Pi^{mat}_{\mu \nu}  (k_E) = \frac{1}{3} \Pi^{mat}_{\rho \rho} P_{\mu \nu},
\end{align}
and,
\begin{align}
&\Pi^{mat}_{\mu \mu} (\vec{k} \rightarrow 0, k_4) = \frac{g^2 C(r)} {4 \pi ^2 k_4} \theta (\mu_q - m_q)  \Bigl[ 4 k_4 p_F \sqrt{p_F^2+m_q^2} \notag \\
&+2 k_4^3 \ln
   \left(\frac{m_q}{\sqrt{p_F^2+m_q^2}+p_F}\right)+\left(2 m_q^2-k_4^2\right) \sqrt{k_4^2+4 m_q^2}  \notag \\
& \times   \ln \left(\frac{\sqrt{\left(k_4^2+4 m_q^2\right)
   \left(p_F^2+m_q^2\right)}-k_4 p_F}{\sqrt{\left(k_4^2+4 m_q^2\right)
   \left(p_F^2+m_q^2\right)}+k_4 p_F}\right)\Bigr],
\end{align}
where $\theta (\mu_q - m_q)$ is the step function.
Then, the gluon propagator ${\mathscr D}_{\mu \nu} (\vec{k} \rightarrow 0, k_4)$ can be written as
\begin{align}
{\mathscr D}_{\mu \nu} (k_4) &= {\mathscr D}_T (-k_4^2) P_{\mu \nu}, \notag \\
{\mathscr D}_T (-k_4^2) &= \frac{1}{M^2(s+1+\hat{\Pi}^{vac}(s)+\hat{\Pi}^{mat}(s))},
\label{eq:gluon-propagator}
\end{align}
where
\begin{align}
s = \frac{k_4^2}{M^2},
\end{align}
$\hat{\Pi}^{vac}(s)$ is the vacuum part given in Appendix B (\ref{eq:vacuum_pol_TW}), and,
\begin{align}
&\hat{\Pi}^{mat}(s) =  \frac{g^2 C(r)} {12 \pi ^2} \theta (\zeta -\xi) \Bigl[ 4 \sqrt{\zeta (\zeta - \xi)} \notag \\
&+2 s \ln
   \left(\frac{\sqrt{\xi}}{\sqrt{\zeta}+\sqrt{\zeta - \xi}}\right)
+\frac{1}{\sqrt{s}}\left(2 \xi - s \right) \sqrt{s+4 \xi}  \notag \\
& \times   \ln \left(\frac{\sqrt{\zeta (s + 4 \xi)}- \sqrt{s (\zeta -\xi)}}{\sqrt{\zeta (s + 4 \xi)} + \sqrt{s (\zeta -\xi)}}\right)\Bigr].
\end{align}
with
\begin{align}
 \xi = \frac{m_q^2}{M^2}, ~ \zeta = \frac{\mu_q^2}{M^2}.  \label{eq:xi-zeta}
\end{align}

Notice that
\begin{align}
&\hat{\Pi}^{mat}(s \rightarrow 0) =  \frac{g^2 C(r)} {3 \pi ^2} \theta (\zeta -\xi) \Bigl[  \frac{(\zeta - \xi)^{3/2}}{\sqrt{\zeta}}\Bigr] > 0, \label{IR_matter}
\end{align}
and
\begin{align}
&\hat{\Pi}^{mat}(s \rightarrow \infty) = O(s). \label{UV_matter}
\end{align}

\section{Results}

In this section, we study the analytic structure of the gluon propagator with the one-loop quantum corrections presented in the previous section.

From here on, we set $G = SU(3)$ and the renormalization scale $\mu_0 = 1 ~\mathrm{GeV}$. With the RG improvements, the best-fit parameters reported in \cite{PTW14} are
\begin{subequations}
\begin{align}
    g = 4.5,~ M = 0.42 ~\mathrm{GeV}, \label{eq:PTW_parameter}
\end{align}
and the up and down quark mass parameters
\begin{align}
    m_q = 0.13 ~\mathrm{GeV}, \label{eq:PTW_parameter_mq}
\end{align}
in the case of $N_F = 2$.
\end{subequations}





\subsection{Number of complex poles}

 \begin{figure}[tb]
  \begin{center}
   \includegraphics[width=0.9\linewidth]{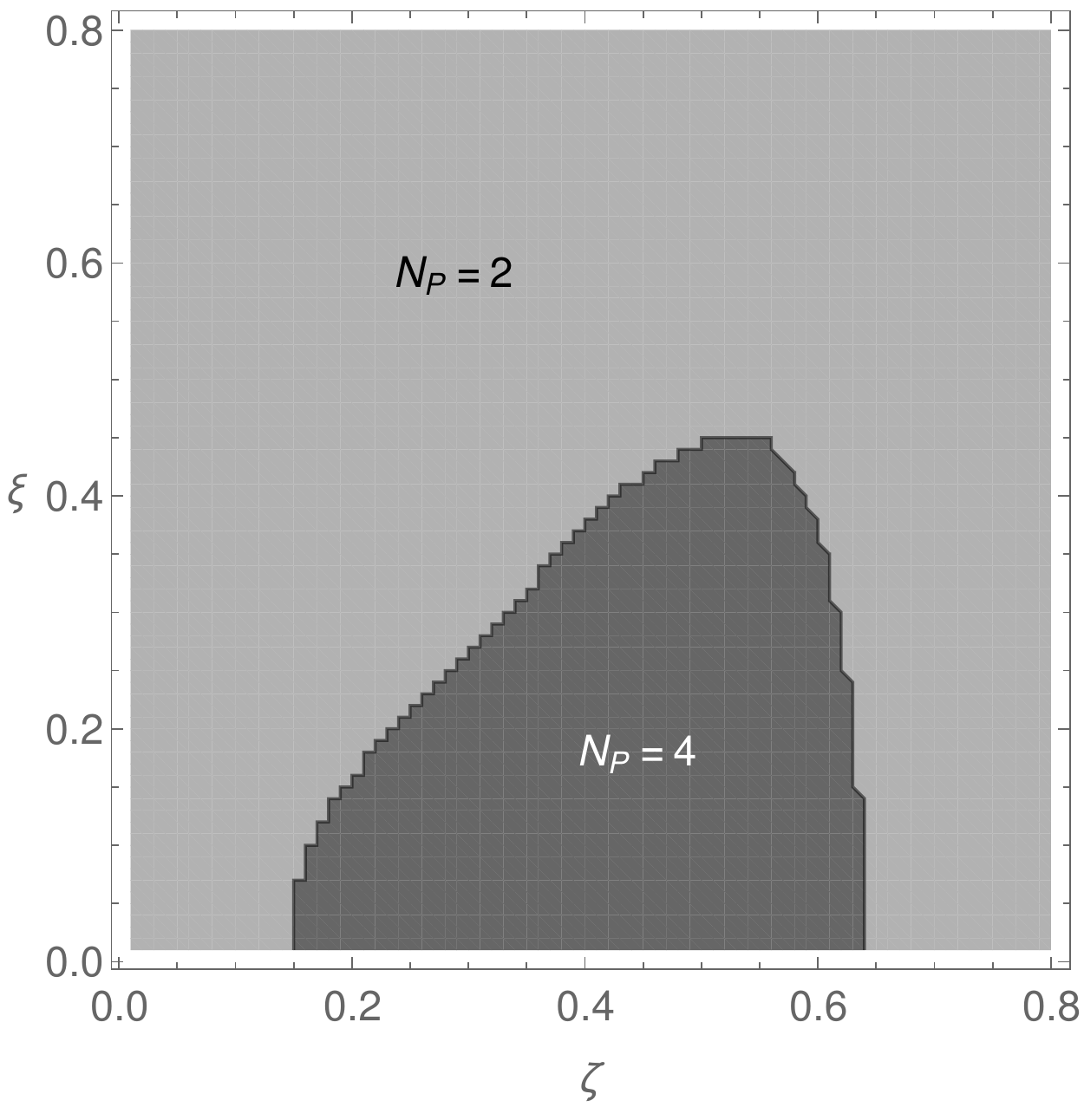}
  \end{center}
   \caption{Contour plot of $N_W(C)$ for the gluon propagator on the $(\zeta = \frac{\mu_q^2}{M^2},\xi= \frac{m_q^2}{M^2})$ plane at the set of parameters (\ref{eq:PTW_parameter}), which gives the number of complex poles through the relation $N_P = - N_W(C)$. In the $N_P = 2,4$ regions, the gluon propagator has one pair and two pairs of complex conjugate poles, respectively. We used $N = 8 \times 10^5$, $x_n = (n+1) \times 10^{-5} M^2~ (n = 0, \cdots, N)$, and $x_{N+1} = 50 M^2$ for the discretization (\ref{eq:winding-timelike}) and $\epsilon = 10^{-9} M^2$ for the infinitesimal imaginary part. For larger $\zeta$ or $\xi$, the gluon propagator has one pair of complex conjugate poles.}
    \label{fig:typical_windingnumber}
\end{figure}

First, we compute the number of complex poles for the one-loop gluon propagator (\ref{eq:gluon-propagator}) at the parameters (\ref{eq:PTW_parameter}), and $N_F = 2$ by using the winding number $N_W(C)$ defined in (\ref{eq:winding-timelike}) of Sec.~IIB.
We analytically continue the gluon propagator ${\mathscr D}_T (-k_4^2)$ from the Euclidean axis $z^2 = - k_4^2$ to the whole $z^2$ plane. In terms of (\ref{eq:generalized-spec-repr}),
\begin{align}
    D(z,\vec{k} \rightarrow 0) = {\mathscr D}_T (z^2).
\end{align}
Let us check the prerequisites for the claim of Sec.~IIB.
The gluon propagator takes the form (\ref{eq:generalized-spec-repr}), since it has no branch cut except for the real axis as can be confirmed from (\ref{eq:gluon-propagator}). Thus, it can have only complex poles in the complex plane excluding the real axis. 
Also, this gluon propagator satisfies the conditions (i) and (ii) in Sec.~II~B and has no zeros $N_Z = 0$:
\begin{itemize}
    \item As $|z^2| \rightarrow \infty$, 
\begin{align}
\mathscr{D}_T (z^2) \simeq [g^2 \gamma_0 (-z^2) \ln |z^2| + O(z^2)]^{-1}, 
\end{align}
    from (\ref{eq:gluon_asymptotic_UV_one_loop}) and (\ref{UV_matter}) as desired.\footnote{Although the naive one-loop asymptotic form has the wrong exponent of the logarithm $(\ln |z^2|)$, we can expect this does not change $N_W(C)$ as it has similar phase to the correct one (for $N_F < 10$). See Appendix B.}
    \item As $|z^2| \rightarrow 0$, 
\begin{align}
\mathscr{D}_T (z^2) > 0,
\end{align}
from (\ref{eq:pos_IR_prop_cond}) and (\ref{IR_matter}) as desired.
    \item The gluon propagator has no zeros $N_Z = 0$, since the inverse of the propagator (\ref{eq:gluon-propagator}) does not diverge.
\end{itemize}
Therefore, the number of complex poles can be calculated according to (\ref{eq:winding-timelike}) and $N_P = - N_W(C)$. For the condition (iii) in Sec.~II~B, we numerically check convergence of the refinement of the discretization.

Figure \ref{fig:typical_windingnumber} is a contour plot of $N_W(C)$ on the plane $(\mu_q^2, m_q^2)$ normalized by the gluon mass $M^2$, i.e. $(\zeta, \xi )$ plane with (\ref{eq:xi-zeta}). At the vacuum case $\mu_q = 0$, the gluon propagator has one pair of complex conjugate poles, namely two complex poles ($N_P = 2$), irrespective of the value $m_q$. The novel $N_P = 4$ region appears for light quarks ($\xi \lesssim 0.5$, or $m_q \lesssim 0.30$ GeV). As the quark chemical potential $\mu_q$ increases for such light quarks, the number of complex poles becomes four ($N_P = 4$) at slightly above the quark mass $m_q$ and backs to two ($N_P = 2$) at $\zeta \approx 0.6$, or $\mu_q \approx 0.33$ GeV. In the intermediate quark chemical potential, the gluon propagator has four complex poles in complex $z^2$ plane. For large $m_q$ or $\mu_q$, the gluon propagator has two complex poles as in the vacuum case.

\subsection{Analytic structure at a specific set of parameters}

 \begin{figure}[tb]
 \begin{minipage}{\hsize}
  \begin{center}
   \includegraphics[width= \linewidth]{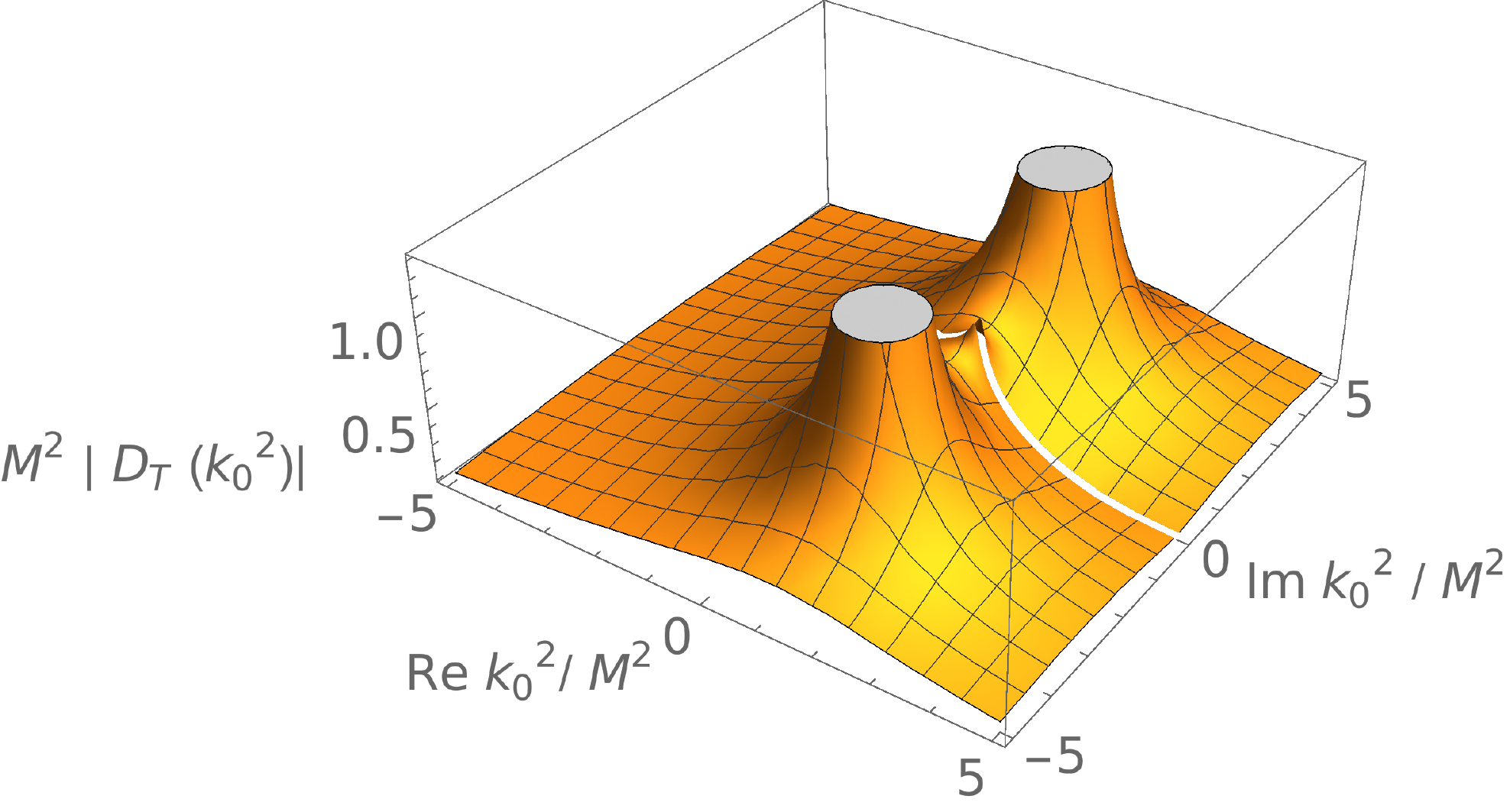}
  \end{center}

 \end{minipage}
 \begin{minipage}{\hsize}
  \begin{center}
   \includegraphics[width= \linewidth]{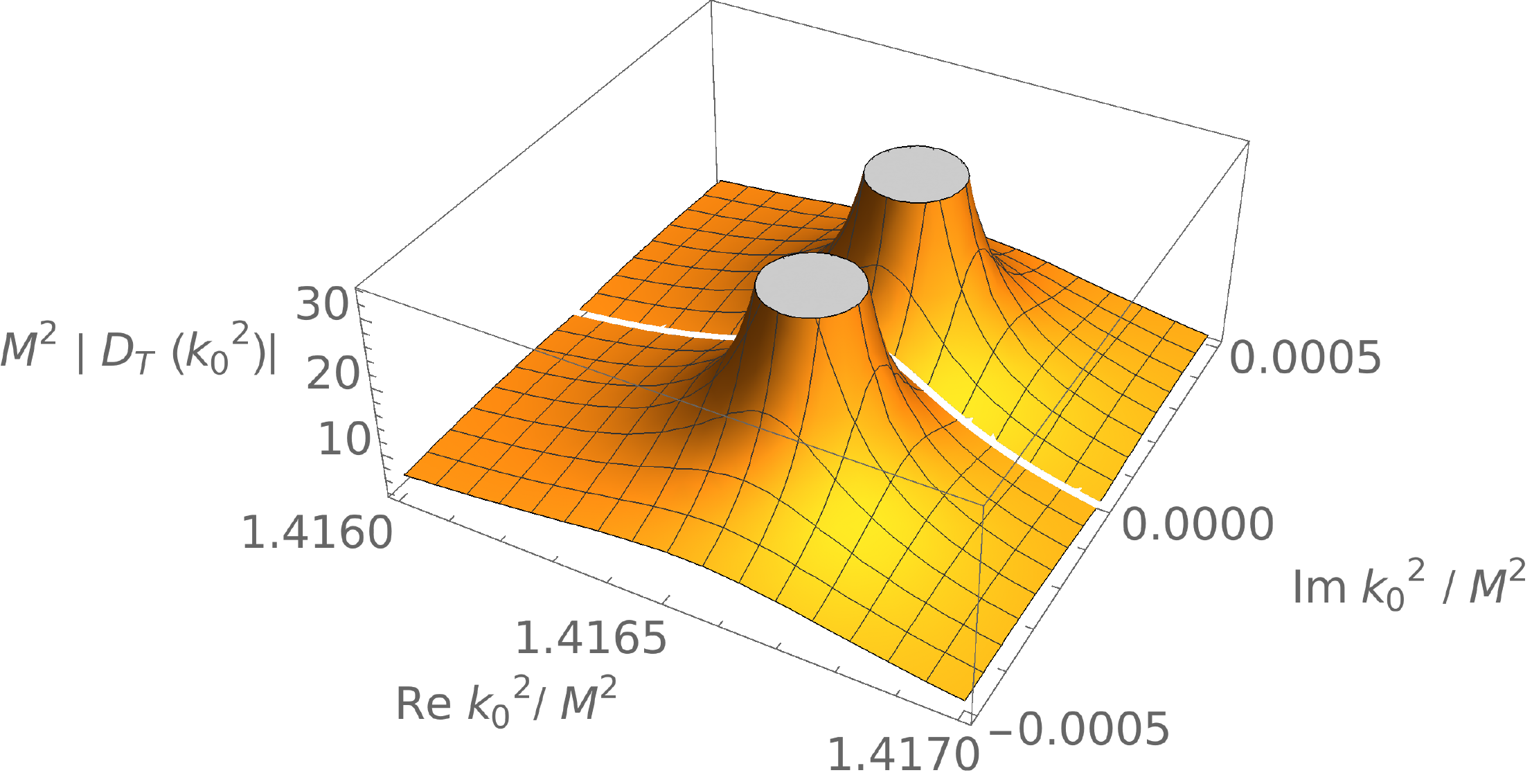}
  \end{center}
 \end{minipage}
 \caption{ Modulus of the gluon propagator $|\mathscr{D}_T(k_0^2)|$ with the set of parameters (\ref{eq:parameter_specific}) on the complex $k_0^2$ plane. The top panel is written in the range of $-5 < \operatorname{Re} k_0^2/ M^2 <5, ~ -5 < \operatorname{Im} k_0^2/ M^2 <5$. 
 A pair of complex conjugate poles is clearly illustrated.
 The other pair of complex conjugate poles exists at $\operatorname{Re} k_0^2/M^2 \approx 1.4$ near the real axis of $k_0^2$, which is however difficult to be identified in the top panel, and hence is enlarged to be visible in the range $1.416 < \operatorname{Re} k_0^2/ M^2 < 1.417, ~ - 5 \times 10^{-4} < \operatorname{Im} k_0^2/ M^2 < 5 \times 10^{-4}$ in the bottom panel.
 }
    \label{fig:025prop_3D}
\end{figure}

 \begin{figure}[tb]
 \begin{minipage}{\hsize}
  \begin{center}
   \includegraphics[width=0.9\linewidth]{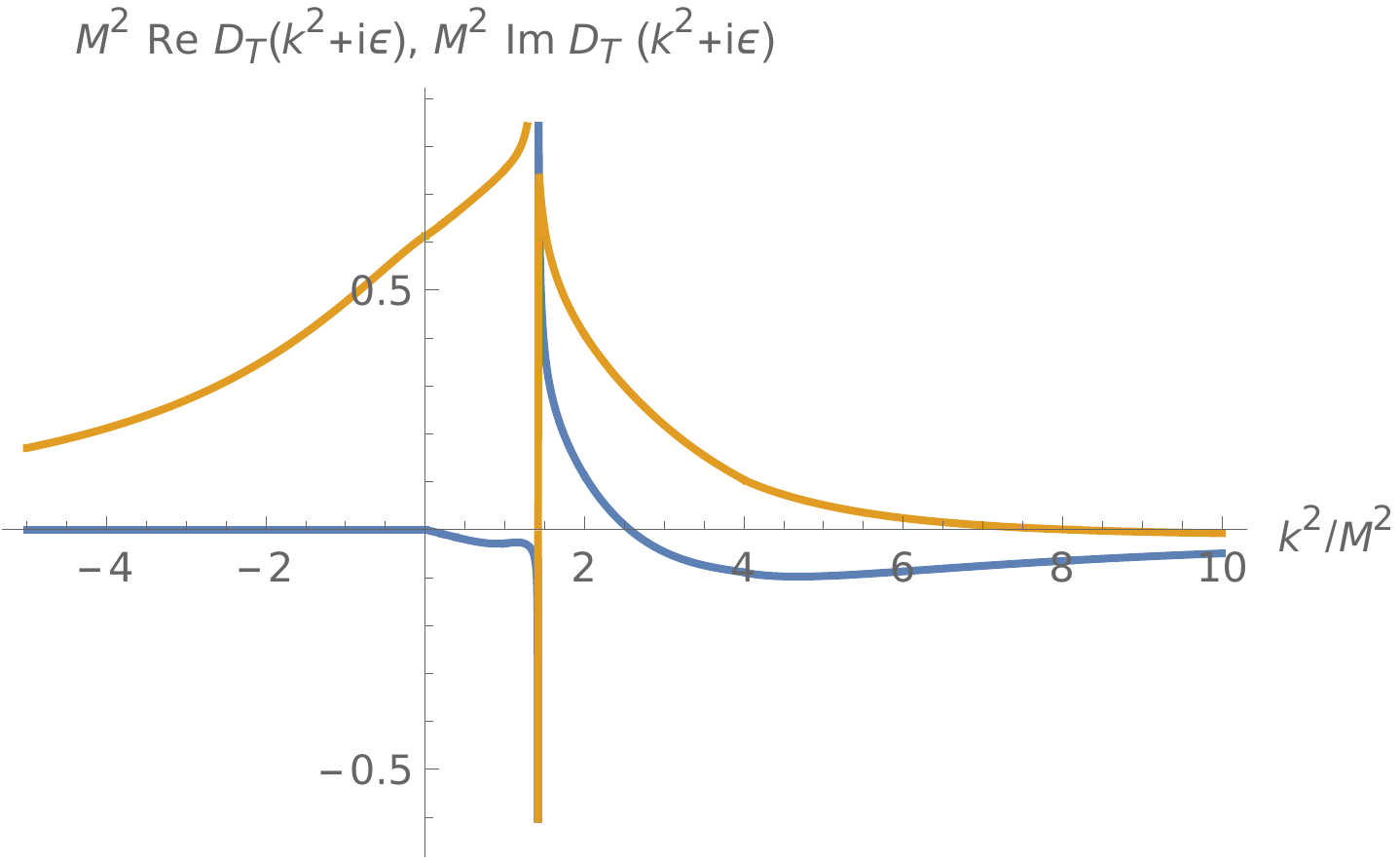}
  \end{center}

 \end{minipage}
 \begin{minipage}{\hsize}
  \begin{center}
   \includegraphics[width=0.9\linewidth]{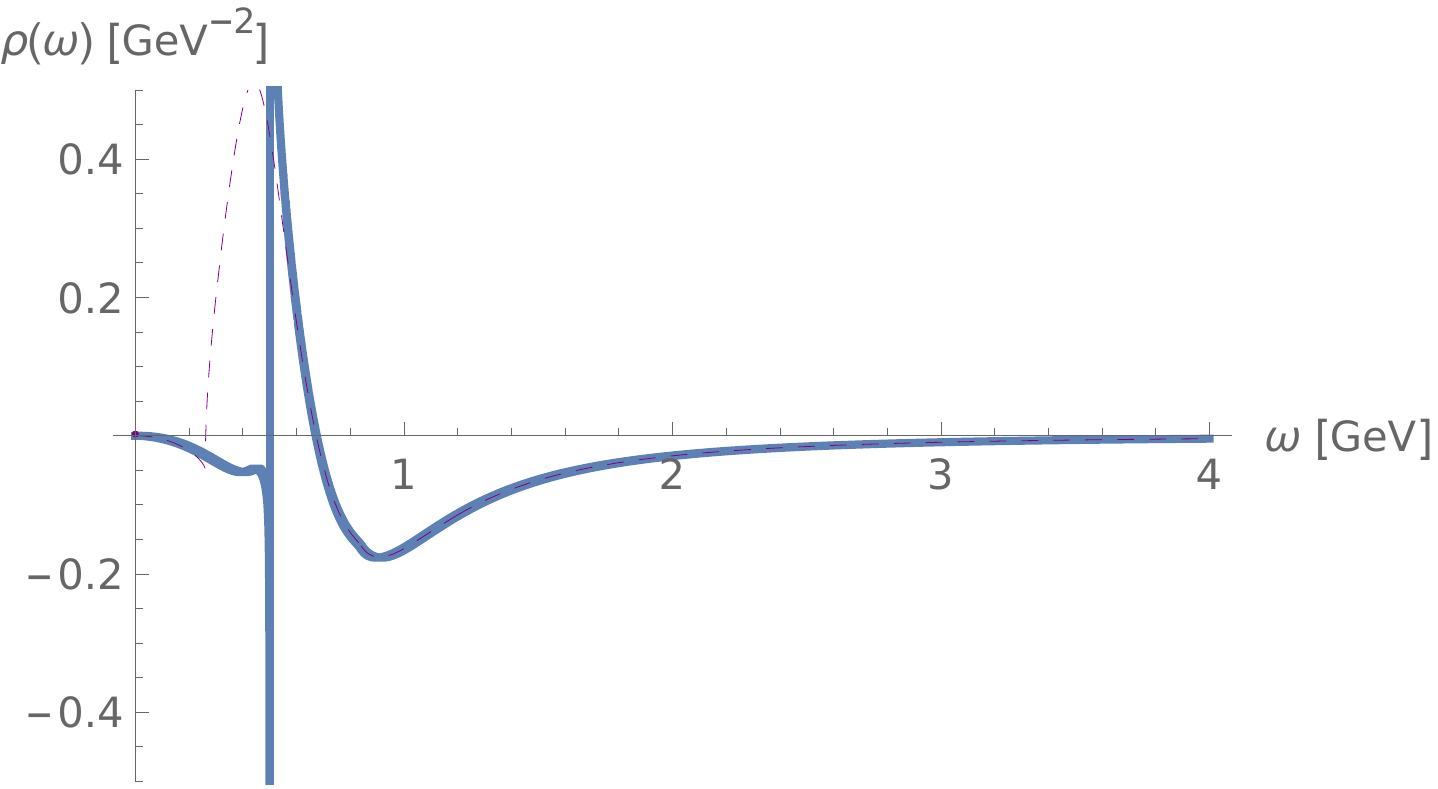}
  \end{center}
 \end{minipage}
 \caption{Top panel: real (orange) and Imaginary (blue) parts of the gluon propagator (\ref{eq:gluon-propagator}) with real $k_0^2$ at the set of parameters (\ref{eq:parameter_specific}) and $N_F =2$. The peak at $k_0^2/M^2 \approx 1.4$ reflects the fact that the gluon propagator at this set of parameters has a pair of almost real complex poles.
 Bottom panel: the spectral function at the same set of parameters. A pair of positive and negative peaks is located at $\omega \approx 0.5$ GeV. At $\omega \approx 0.5$, the positive peak lasts up to $\max{\rho} \sim 2.7~ \mathrm{GeV^{-2}}$ and the negative one to $\min{\rho} \sim -29~ \mathrm{GeV^{-2}}$. The purple dashed curve plots the vacuum one $\mu_q = 0$. In the $\omega \rightarrow 0$ and $\omega \rightarrow \infty$ limit, both of them exhibit similar behavior.}
    \label{fig:025prop_spec}
\end{figure}

Next, we take a further look at the analytic structure of the gluon propagator at a specific set of parameters. As the $N_P = 4$ region with intermediate $\mu_q$ will be interesting, let us choose (\ref{eq:PTW_parameter}), (\ref{eq:PTW_parameter_mq}), $\mu_q = 0.25 \textrm{ GeV}$, i.e.,
\begin{align}
    (g,M,m_q , \mu_q) = (4.5,0.42 \textrm{ GeV}, 0.13 \textrm{ GeV}, 0.25 \textrm{ GeV}), \label{eq:parameter_specific}
\end{align}
and $N_F =2$.

In what follows, we use $k_0$ to denote the complex variable $z$:
\begin{align}
k_0 := z.
\end{align}
To take a look at the analytic structure of the gluon propagator, let us see its modulus on the complex $k_0^2$ plane. The modulus of the gluon propagator $\mathscr{D}_T (k_0^2)$ is plotted in Fig.~\ref{fig:025prop_3D}. We can observe that the gluon propagator at the given parameters (\ref{eq:parameter_specific}) has indeed two pairs of complex conjugate poles. One pair that is clearly visible in the top panel of Fig.~\ref{fig:025prop_3D} is located at $k_0^2/M^2 \approx 1.4 \pm 2.6 i$, or $k_0 \approx \pm 0.62 \pm 0.36 i$ GeV. The other pair of complex conjugate poles is at  $k_0^2/M^2 \approx 1.4 \pm (1.8 \times 10^{-4}) i$, or $k_0 \approx \pm 0.50 \pm (3.1 \times 10^{-5})i$ GeV. 

The latter pair has very small imaginary part, while the former one is similar to that in the vacuum case. This smallness of the imaginary part is a universal feature not only around the transition, but also on the whole $N_P = 4$ region, as we will see in the next subsection.

The gluon propagator (\ref{eq:gluon-propagator}) with real $k_0^2$ and its spectral function,
\begin{align}
    \rho(\omega) := \rho (\omega,\vec{k} \rightarrow 0) := \frac{1}{\pi} \operatorname{Im} \mathscr{D}_T (\omega^2 + i \epsilon),
\end{align}
are displayed in Fig.~\ref{fig:025prop_spec}. The propagator shows a rapid oscillation at $k_0^2/M^2 = - k_4^2/M^2 \approx 1.4$, or $k_0 \approx 0.5 \textrm{~GeV} \approx 2 \mu_q$. The negative peak of the spectral function has a larger value than the positive one: $\max{\rho} \sim 2.7~ \mathrm{GeV^{-2}}$ and $\min{\rho} \sim -29~ \mathrm{GeV^{-2}}$.
The rapid change is consistent with existence of almost real complex poles.
Apart from the sharp peak, the gluon propagator is similar to the vacuum one. The quark chemical potential affects the gluon propagator significantly only around $k_0 \approx 2 \mu_q$.

\subsection{Locations of complex poles}

Let us investigate locations of complex poles of the gluon propagator for various parameters $(\zeta = \frac{\mu_q^2}{M^2}, \xi = \frac{m_q^2}{M^2} )$ with fixed $(g,M)$ of (\ref{eq:PTW_parameter}). We present the ratio $\omega_I/ \omega_R$ of the real and imaginary parts of a complex pole,
\begin{align}
    k_0 = \omega_R + i \omega_I \in \mathbb{C},
\end{align}
on the $(\zeta,\xi)$ plane and a trajectory of poles for varying $\mu_q$ and at fixed $m_q$.

First, we compute the ratio $\omega_I/ \omega_R$ to obtain an overview on positions of complex poles on the parameter space $(\zeta, \xi)$. We can restrict ourselves to $\omega_R >0,~\omega_I > 0$ without loss of generality from the Schwarz reflection principle and the symmetry $k_0 \rightarrow - k_0$. As the gluon propagator has at most two pairs of complex conjugate poles with respect to $k_0^2$, it is sufficient to find $\max \omega_I/ \omega_R$ and $\min \omega_I/ \omega_R$.

 \begin{figure}[tb]
 \begin{minipage}{\hsize}
  \begin{center}
   \includegraphics[width=\linewidth]{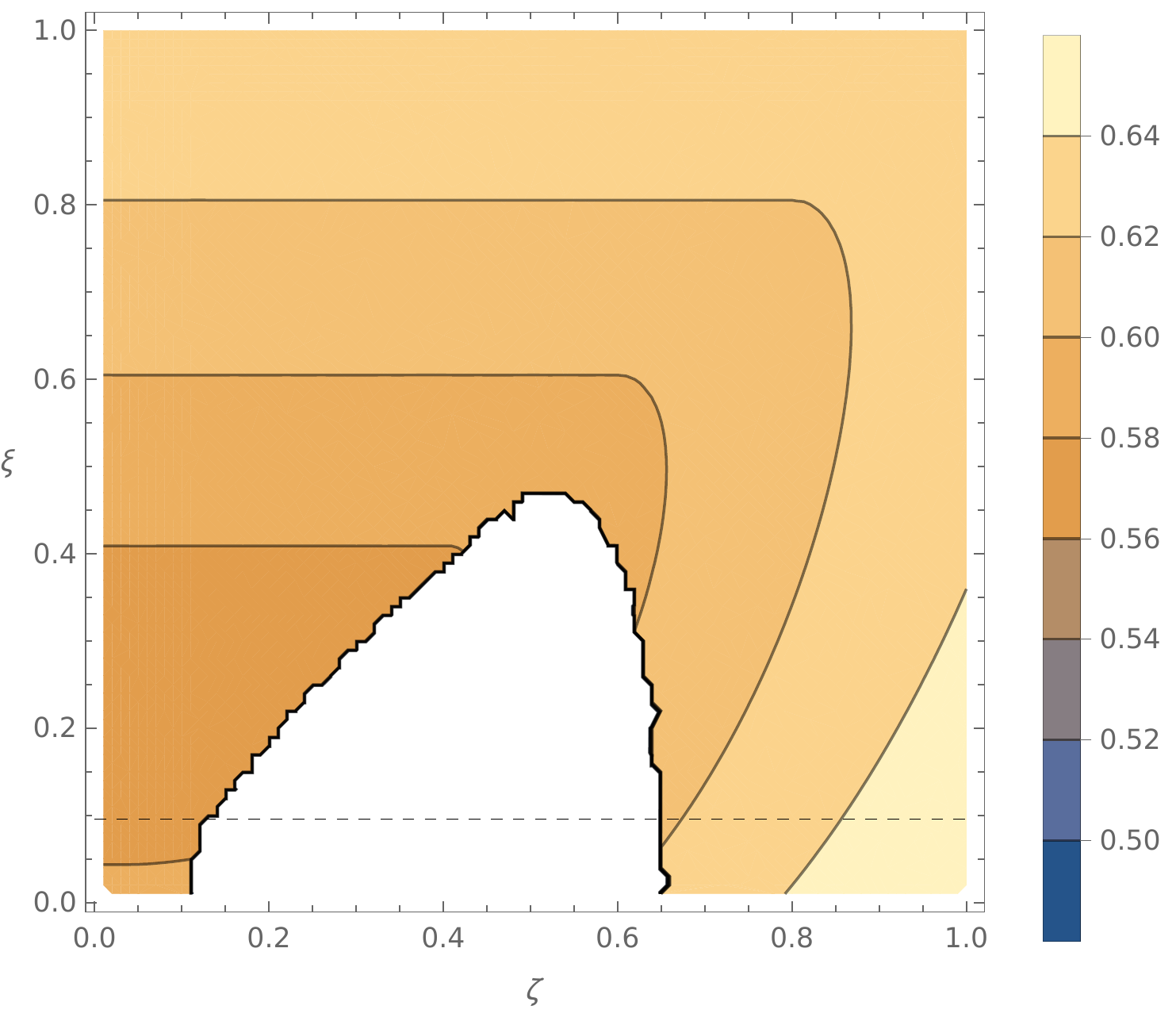}
  \end{center}

 \end{minipage}
 \begin{minipage}{\hsize}
  \begin{center}
   \includegraphics[width=\linewidth]{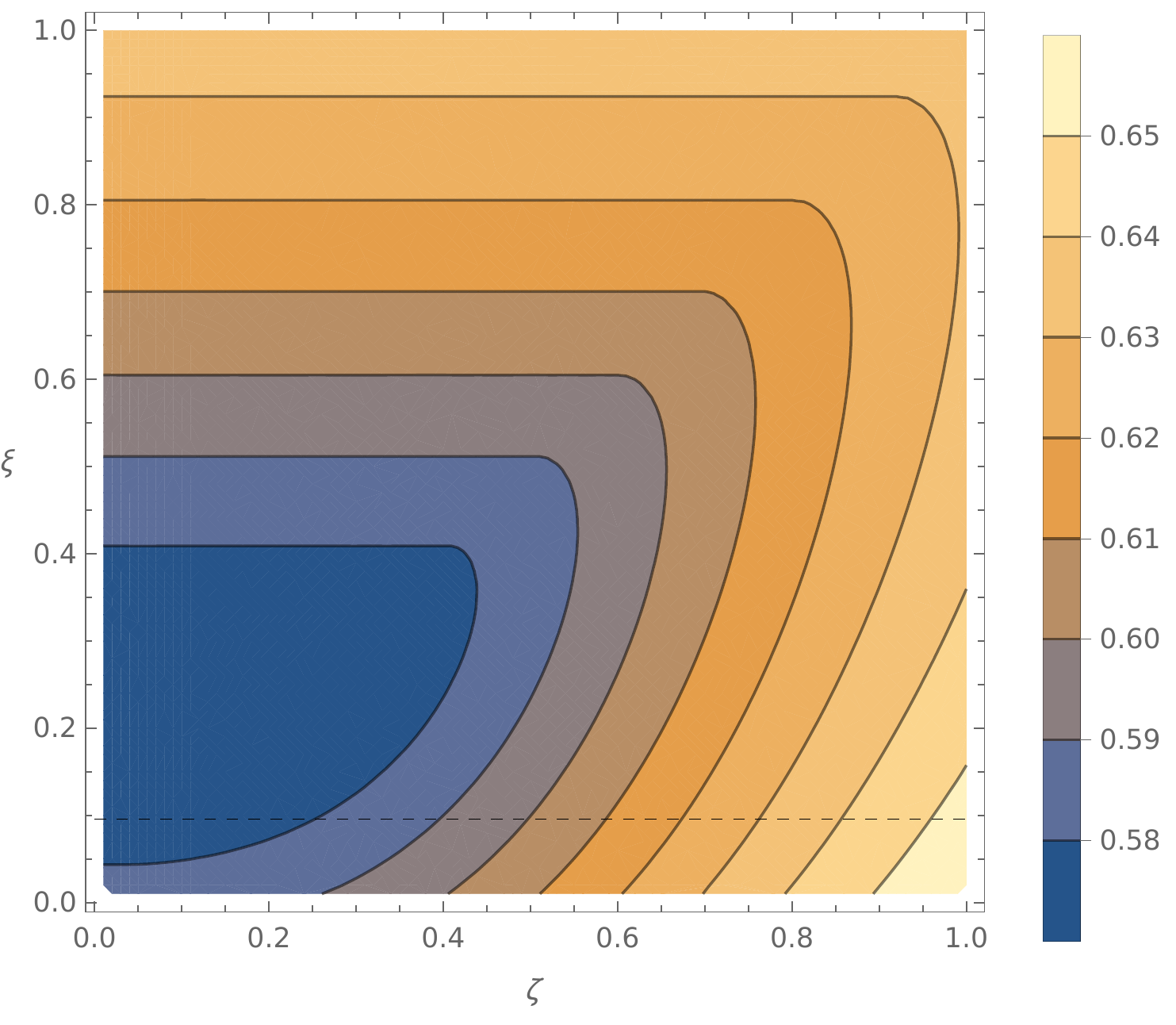}
  \end{center}
 \end{minipage}
 \caption{Contour plots of $\min \omega_I/ \omega_R$ (top) and $\max \omega_I/ \omega_R$ (bottom) for a complex pole $k_0 = \omega_R + i \omega_I, ~ (\omega_R >0,~\omega_I > 0)$ of the gluon propagator on the $(\zeta = \frac{\mu_q^2}{M^2}, \xi = \frac{m_q^2}{M^2})$ plane. 
The region of $\min \omega_I/ \omega_R < 10^{-3}$ is represented by a blank, where the gluon propagator has two pairs of complex conjugate poles. The horizontal dashed line is at $\xi = 0.096$, or $m_q = 0.13$ GeV.}
    \label{fig:ratio_min_max}
\end{figure}

Contour plots of the ratios ($\max \omega_I/ \omega_R$ and $\min \omega_I/ \omega_R$) are shown in Fig.~\ref{fig:ratio_min_max}. 
This result is consistent with Fig.~\ref{fig:typical_windingnumber} as $\max \omega_I/ \omega_R \neq \min \omega_I/ \omega_R$ only on the $N_P = 4$ region, where the gluon propagator $\mathscr{D}_T(k_0^2)$ has two pairs of complex conjugate poles with respect to $k_0^2$.
These figures indicate that the gluon propagator has a pair of almost real complex poles in the $N_P = 4$ region shown, while the pair with $\max \omega_I/ \omega_R$ is always of the same order of magnitude.

Moreover, in general, the ratio $\omega_I/ \omega_R$ tends to increase as the quark chemical potential $\mu_q$ increases, except for the almost real poles.
In other words, the gluon propagator becomes ``less particlelike'' for large $\mu_q$.

In the previous subsection, we observed that both the sharp spectral peak and almost real poles appear at $\operatorname{Re} k_0 \approx 2 \mu_q ( \approx 0.5 \textrm{~GeV})$ at $\mu_q = 0.25$ GeV. This feature is not limited to the specific parameter but universal.
Let us examine locations of complex poles at the parameter (\ref{eq:PTW_parameter}) and (\ref{eq:PTW_parameter_mq}) with varying $\mu_q$.

 \begin{figure}[tb]
 \begin{minipage}{\hsize}
  \begin{center}
   \includegraphics[width=0.9\linewidth]{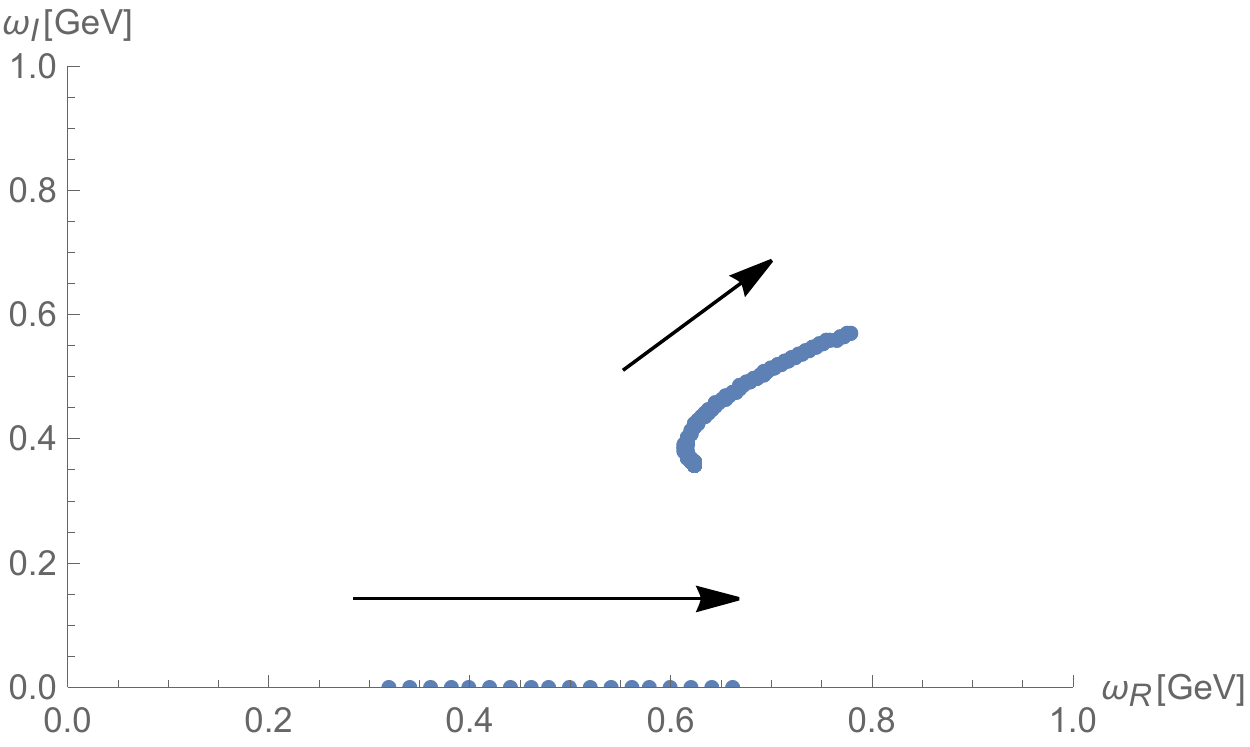}
  \end{center}

 \end{minipage}
 \begin{minipage}{\hsize}
  \begin{center}
   \includegraphics[width=0.9\linewidth]{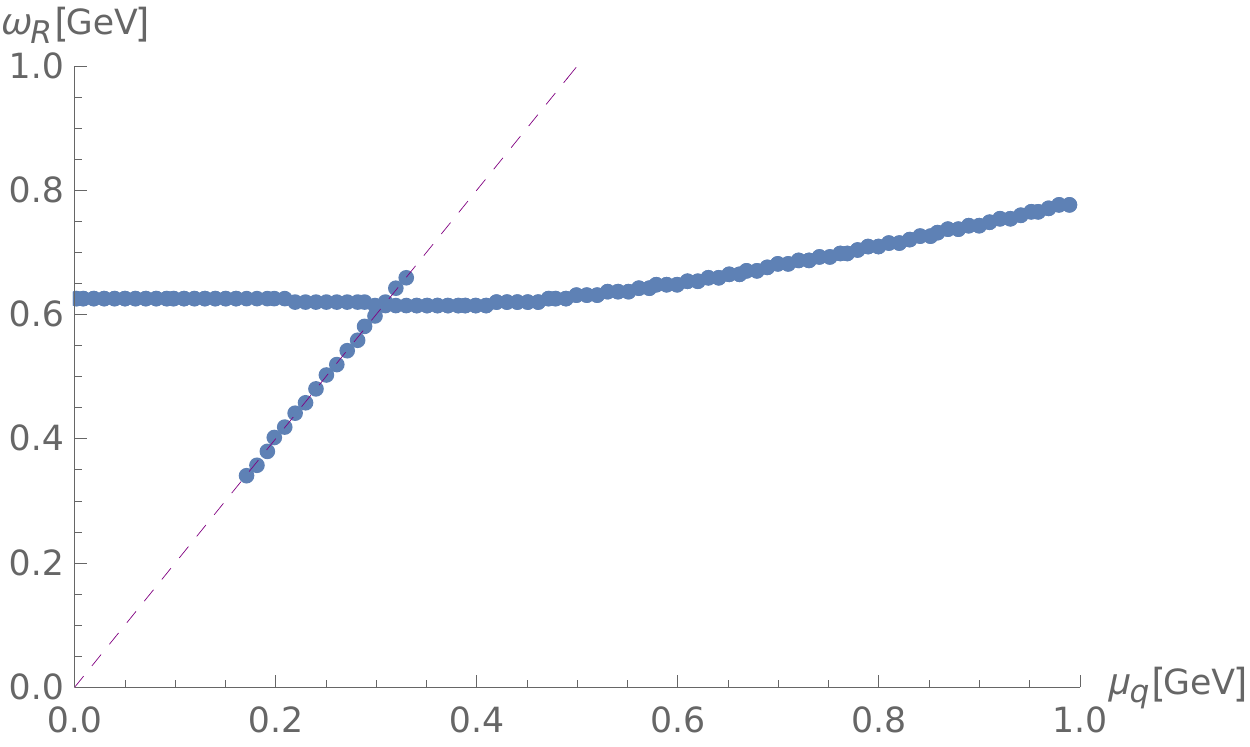}
  \end{center}
 \end{minipage}
 \caption{Top panel: trajectory of a complex pole $k_0 = \omega_R + i \omega_I, ~ (\omega_R >0,~\omega_I > 0)$ of the gluon propagator in the plane $(\omega_R,\omega_I)$ at the parameter (\ref{eq:PTW_parameter}) and (\ref{eq:PTW_parameter_mq}) with varying $\mu_q$ from 0 to 1 GeV. As $\mu_q$ increases, the poles move along the arrows. Note that the almost real pole ($\omega_I \approx 0$) exists only for the $N_P = 4$ region while the other pole for any value of $\mu_q$.
 Bottom panel: $\mu_q$ dependence of the real part of location of a complex pole.
 The data of the new complex poles are approximated by the straight line $\omega_R = 2 \mu_q$ (purple dashed line) well. This figure shows the almost real pole appears at $\mu_q \approx 0.16$ GeV and disappears at $\mu_q \approx 0.33$ GeV in agreement with Fig.~\ref{fig:typical_windingnumber}.}
    \label{fig:pole_traj}
\end{figure}

Figure \ref{fig:pole_traj} plots a trajectory of complex poles on the complex $k_0$ plane and $\mu_q$-dependence of the real parts of the complex poles.
As $\mu_q$ increases, a new pole appears from the branch cut (at $\mu_q \approx 0.16$ GeV), then moves along the real axis, and is finally absorbed into the branch cut (at $\mu_q \approx 0.33$ GeV). On the other hand, the other pole increases its imaginary part gradually. This feature is consistent with the number of complex poles of Sec. IV A.

The bottom panel of Fig.~\ref{fig:pole_traj} clearly indicates that the real part of the new almost real pole can be approximated by $2 \mu_q$: $\omega_R \approx 2 \mu_q$.
We have also checked that the almost real poles are at $\operatorname{Re} k_0 \approx 2 \mu_q$ for different values of $m_q$.

\subsection{$(g,M)$ dependence}

Before concluding this section, let us consider $(g,M)$ dependence of the above results, especially, the number of complex poles. For details of these analyses, see Appendix C. We have found that the $N_P$ contour plot is not sensitive to a detailed choice of the parameters $(g,M)$.

\subsection{Summary of results}

In summary, we have observed the following points in this section.
\begin{itemize}
    \item There is a $N_P = 4$ region, where the gluon propagator has two pairs of complex conjugate poles with respect to $k_0^2$. See Fig.~\ref{fig:typical_windingnumber}.
    \item  In the $N_P = 4$ region, the gluon propagator has an almost real pair of complex conjugate poles at $\operatorname{Re} k_0 \approx 2 \mu_q$. See Fig.~\ref{fig:pole_traj}
    \item With almost real poles, the real part and imaginary part (to be identified with the spectral function) have narrow peaks at $k_0 \approx 2 \mu_q$. See Fig.~\ref{fig:025prop_spec}
    \item The ratio $\omega_I/ \omega_R$ of a complex pole $k_0 = \omega_R + i \omega_I, ~ (\omega_R >0,~\omega_I > 0)$ tends to increase as $\mu_q$ increases, except for the almost real poles. See Fig.~\ref{fig:ratio_min_max}.
\end{itemize}

\section{Discussion}

In this section, we first discuss implications of the results shown in the previous sections, especially the appearance of the almost real pole in the $N_P = 4$ region. Second, we compute $N_W(C)$ of the RG-improved gluon propagator to examine the robustness.
Third, we attempt to estimate the analytic structure of the gluon propagator for relatively large $\mu_q$. Finally, we comment on the infrared problems which appear in the thermal context.

\subsection{Almost real complex poles and spectral function}
For the gluon propagator, we found a new pair of complex conjugate poles at $\operatorname{Re} k_0 \approx 2 \mu_q$ with quite small imaginary parts ($\operatorname{Im} k_0 \approx 0$). Together with the narrow peaks shown in Fig.~\ref{fig:025prop_spec}, the quark chemical potential affects the gluon propagator significantly around $k_0 \approx 2 \mu_q$. 

The importance of the scale $2 \mu_q$ can be understood by the fact that $2 \mu_q$ is the lowest energy for the quark pair creation to occur, which contributes to the spectrum of the gluon, due to the Fermi degeneracy.
Moreover, in the massive model, the gluon ``decouples'' at low energies. Thus, quark loop dominates the low-energy region of the gluon spectral function.
On the other hand, in the high energy region, the gluon and ghost loops win against the quark loop for the gluon spectral function to yield $\rho < 0$ in the large frequency limit for $N_F < 10$ \cite{OZ80}. Therefore, $2 \mu_q$ will be quite an important scale for relatively small $\mu_q$ (but larger than $m_q$), while less important in the high-energy region.
This might explain the appearance and disappearance of the almost real complex poles as varying $\mu_q$.

Since complex poles never appear in the physical spectrum, they should correspond to confined degrees of freedom.
The transition between the $N_P = 2$ and $N_P = 4$ regions indicates that timelike spectra transform to confined complex degrees of freedom, or vice versa.
Therefore, the transition between the $N_P = 2$ and $N_P = 4$ regions might have a physical significance on the dynamics of the strong interaction.

Note that, however, the appearance of the almost real pole may be an artifact of the approximation:
\begin{align}
{\mathscr D}_T (-k_4^2) &\approx \frac{1}{k_4^2+M^2+\Pi^{1-loop}(k_4^2)}, \label{eq:approx_oneloop}
\end{align}
where the vacuum polarization $\Pi$ is replaced by the one-loop expression $\Pi^{1-loop}$.
For example, in this approximation, even the propagator of the Higgs field in $U(1)$ Higgs model with the small gauge-fixing parameter has complex poles with tiny imaginary parts \cite{DEGHMPPS19}. 
The new pole reported in the previous section may be similar to this one.
In this case, the almost real pole should be interpreted as a long-lived collective mode with frequency $2 \mu_q$.

 \begin{figure}[t]
  \begin{center}
   \includegraphics[width=0.9\linewidth]{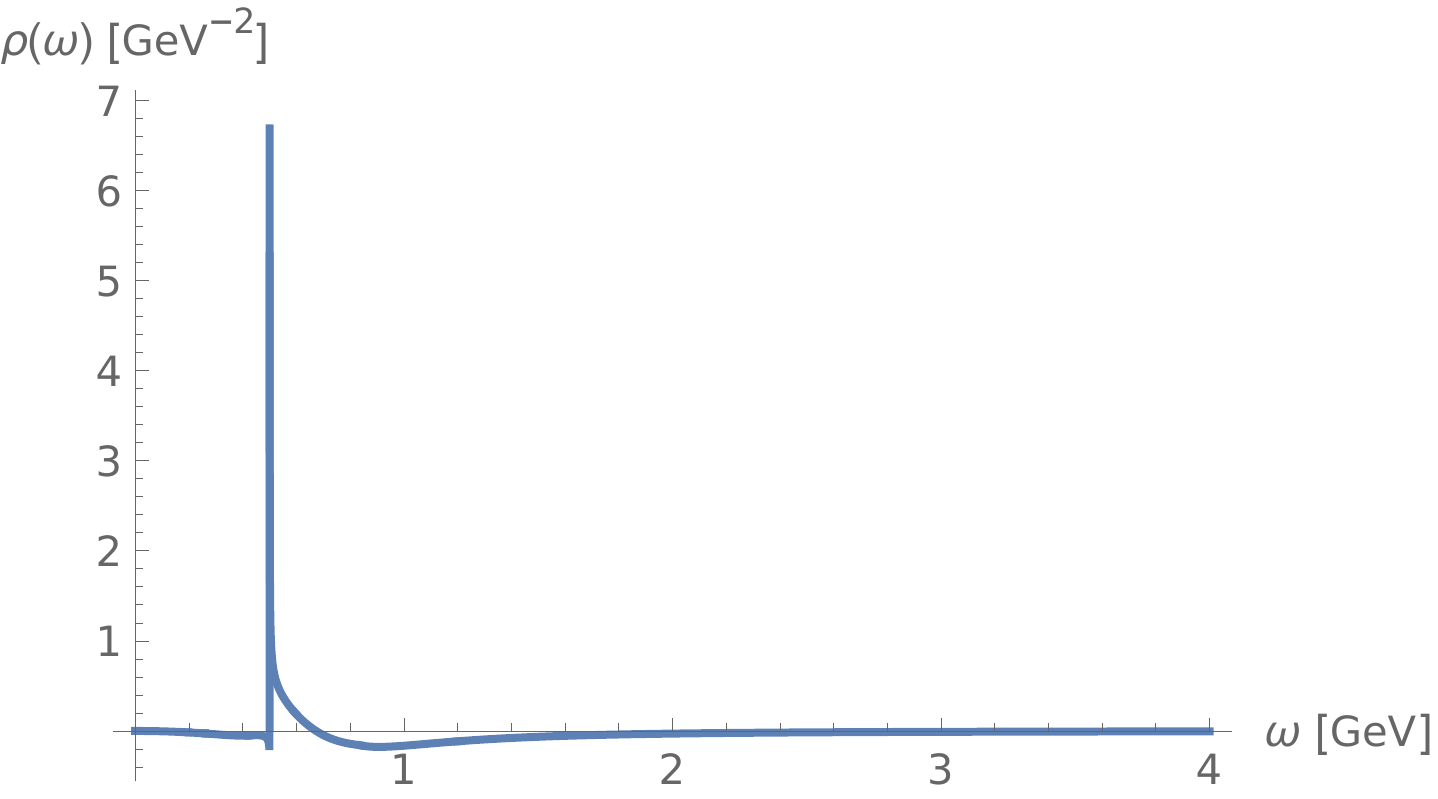}
  \end{center}
   \caption{An estimate of the spectral function if the almost real complex poles are artifacts of the approximation (\ref{eq:approx_oneloop}). This plots $\rho(\omega) = \frac{1}{\pi} \operatorname{Im} \mathscr{D}_T (k_0^2 = \omega^2 + i \epsilon')$ with $\epsilon'/M^2 =  10^{-3}$, which is larger than the imaginary part of the almost real pole. This shows that the spectral function has a long-lived quasiparticle peak, if the complex pole is an error.} 
    \label{fig:025spec_finitee.pdf}
\end{figure}

If the almost real pole is an artifact, an estimate of the spectral function will be given by $\rho(\omega) = \frac{1}{\pi} \operatorname{Im} \mathscr{D}_T (k_0^2 = \omega^2 + i \epsilon')$, where $\epsilon'$ is small but larger than the imaginary part of the almost real pole $\omega_I$. This estimate is displayed in Fig.~\ref{fig:025spec_finitee.pdf} at (\ref{eq:parameter_specific}) and $N_F = 2$. We take $\epsilon'/M^2 =  10^{-3}$ because the complex poles are at $k_0^2/M^2 \approx 1.4 \pm (1.8 \times 10^{-4}) i$. This plot implies that the new ``complex pole'' may correspond to actually a long-lived quasiparticle.
Such an appearance of a quasiparticle could be possibly related to the nuclear superfluidity.

Finally, we note that our results suggest that the $N_P = 4$ region is located in the region less than $\mu_q \approx 0.33$ GeV, which is approximately the matter threshold. This observation would imply that the new complex pole or the possible quasiparticle pole would be in the confined dynamics.
Moreover, it is curious that the right side of the boundary between the $N_P = 2$ and $N_P = 4$ regions locates near the liquid-gas threshold $\mu_q \approx 0.33$ GeV for all $m_q \lesssim 0.33$ GeV (and for $3 \lesssim g \lesssim 8$, see Appendix C).

In summary, we again emphasize the following points,
\begin{itemize}
    \item The chemical potential influences the gluon propagator significantly around $k_0 \approx 2 \mu_q$. This can be explained by the facts, (1) it is the least energy for the quark pair production without momentum transfer $\vec{k} = 0$ and (2) the quark loop is important in the energy scale less than the effective gluon mass in this model.
    \item If the new pair of complex conjugate poles indeed emerges as $\mu_q$ increases, there may be a transition on the boundary between the $N_P = 2$ and $N_P = 4$ regions.
    \item On the other hand, the almost real pole may be an artifact of the approximation (\ref{eq:approx_oneloop}). Then, the gluon propagator would have a quasiparticle spectral peak instead of the complex poles which correspond to confined states.
\end{itemize}

\subsection{RG improvement of computing $N_W(C)$}
As discussed in Sec.~V~A, the appearance of the new complex poles might be an artifact of the approximation.
Since the new complex poles have very small imaginary parts, it seems that the $N_P = 4$ region would be highly affected by a choice of approximation.
Therefore, it is essential to examine the robustness of the $N_P = 4$ region.
Here, as an attempt, we compute $N_W(C)$ from one-loop RG improved data for real frequencies $\{ \mathscr{D}_T (z^2 = x_n + i \epsilon) \}$.

Since we have adopted the renormalization scheme (\ref{eq:TWrenomalization}) described by the vacuum part, we can implement the RG improvement using the vacuum results \cite{PTW14}.
The infrared safety of the scheme (\ref{eq:TWrenomalization}) enables us to implement the one-loop RG improvement avoiding the Landau pole.
Note that the chemical potential $\mu_q$ does not run in this scheme.

The RG equation for the gluon propagator is given by
 \begin{align}
\mathscr{D}_{\mu \nu} &(k_E,\alpha(\mu^2),\mu_q;\mu^2) \notag \\
&= Z_A^{-1} (\mu^2,\mu_0^2) \mathscr{D}_{\mu \nu} (k_E,\alpha(\mu_0^2),\mu_q;\mu_0^2),
\end{align}
 where $\alpha$ denotes the set of gauge coupling and gluon and quark masses $\alpha = (g,M,m_q)$ and $Z_A (\mu^2,\mu_0^2)$ is the renormalization factor of the gluon field. Using this equation, we here approximate the gluon propagator for real frequencies as
  \begin{align}
\mathscr{D}_T &(k_0^2,\alpha(\mu_0^2),\mu_q;\mu_0^2) \notag \\
&\approx  Z_A (|k_0^2|,\mu_0^2) \mathscr{D}_T^{1-loop} (k_0^2,\alpha(|k_0^2|),\mu_q;|k_0^2|).
\end{align}

 \begin{figure}[tb]
  \begin{center}
   \includegraphics[width=0.9\linewidth]{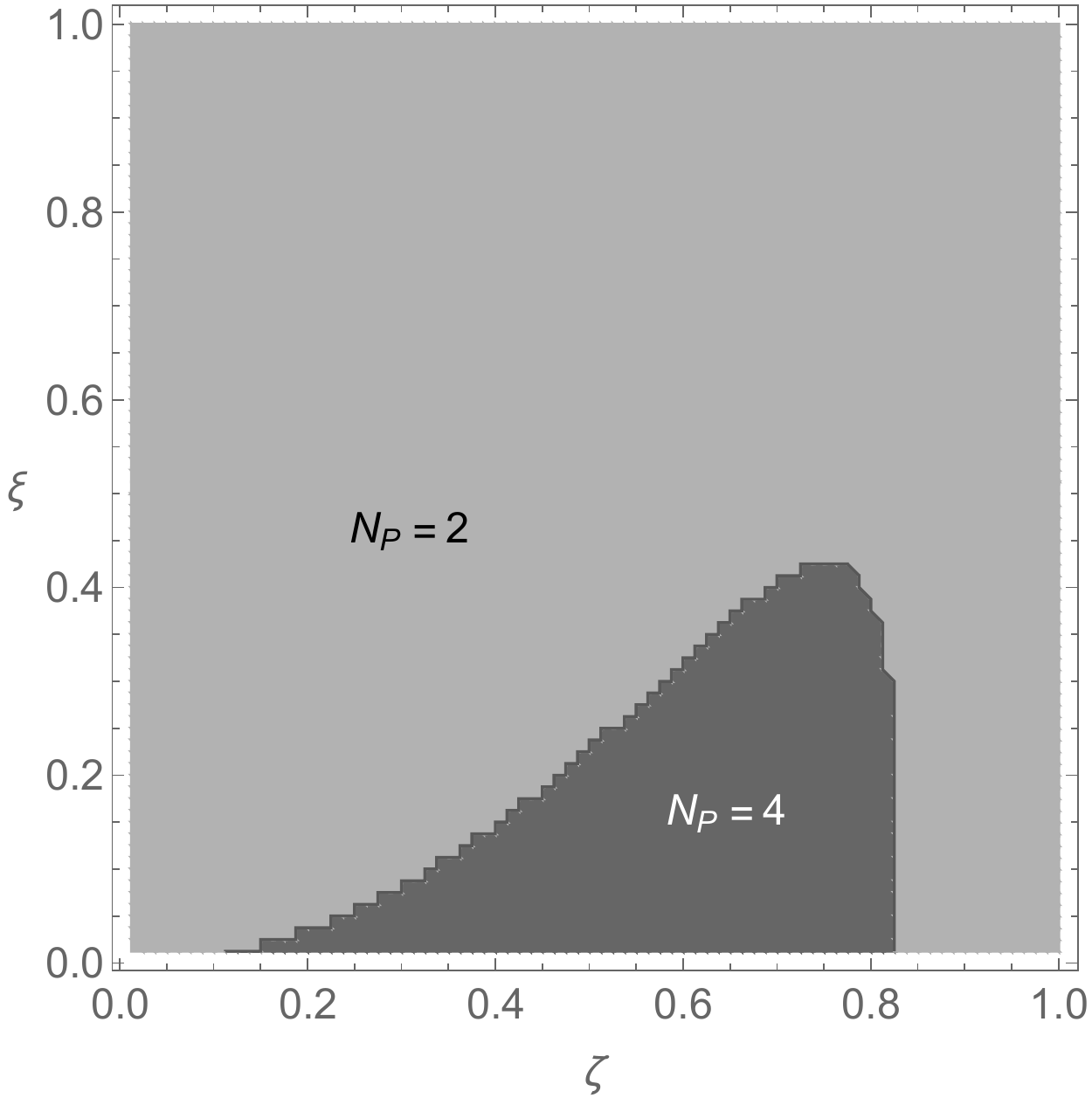}
  \end{center}
   \caption{Contour plot of $N_W(C)$ for the RG-improved gluon propagator on the $(\zeta = \frac{\mu_q^2}{M^2(\mu_0)},\xi= \frac{m_q^2(\mu_0)}{M^2(\mu_0)})$ plane at the set of parameters (\ref{eq:PTW_parameter}), which gives the number of complex poles through the relation $N_P = - N_W(C)$. In the $N_P = 2,4$ regions, the gluon propagator has one pair and two pairs of complex conjugate poles, respectively. We used $N+1 = 4 \times 10^5$, $x_0 = 4 \times 10^{-5}  ~ \mathrm{GeV^2} ,~~x_n - x_0 = n \times 10^{-5} ~ \mathrm{GeV^2} ~ (n = 1, \cdots, N+1)$ for the discretization (\ref{eq:winding-timelike}) and $\epsilon = 10^{-9} M^2$ for the infinitesimal imaginary part.
   }
    \label{fig:RG-winding}
\end{figure}

Here, we suppose $N_Z = 0$, since $ \mathscr{D}_T^{1-loop} (k_0^2)$ has no complex zeros. Then, we can compute the number of complex poles through the relation $N_P = - N_W(C)$.

A contour plot of $N_W(C)$ on the plane $(\mu_q^2, m_q^2 (\mu_0))$ at the set of parameters (\ref{eq:PTW_parameter}) computed by this approximation scheme is displayed in Fig.~\ref{fig:RG-winding}.
Note that the $N_P = 4$ region in Fig.~\ref{fig:RG-winding}
is qualitatively similar to that of Fig.~\ref{fig:typical_windingnumber}.
For example, at the set of parameters (\ref{eq:PTW_parameter}) and (\ref{eq:PTW_parameter_mq}), the $N_P = 4$ region lies in $0.24 ~\mathrm{GeV} \lesssim \mu_q \lesssim 0.38 ~\mathrm{GeV}$.
This qualitative similarity between Fig.~\ref{fig:typical_windingnumber} and Fig.~\ref{fig:RG-winding} suggests the robustness of the $N_P = 4$ region and that the gluon propagator could have additional complex poles rather than a quasiparticle pole in this region.

We ought to note that the one-loop RG improvement would still be not enough to capture important effects of quarks as pointed out in \cite{PTW14}.
Although another way of this examination is to include the two-loop quantum corrections, the two-loop calculation is beyond the scope of the present study.

\subsection{Slightly larger $\mu_q$}
To obtain a fair agreement with lattice results in two-color QCD, the effective gluon mass parameter $M$ is chosen of order $\mu_q$ for $\mu_q \sim 0.6$ -- $1$ GeV \cite{Suenaga-Kojo19}.
As an attempt to obtain an estimate of the analytic structure of the gluon propagator for the slightly large $\mu_q$, we investigate the strict one-loop gluon propagator at $\mu_q = M$.

 \begin{figure}[tb]
  \begin{center}
   \includegraphics[width=0.9\linewidth]{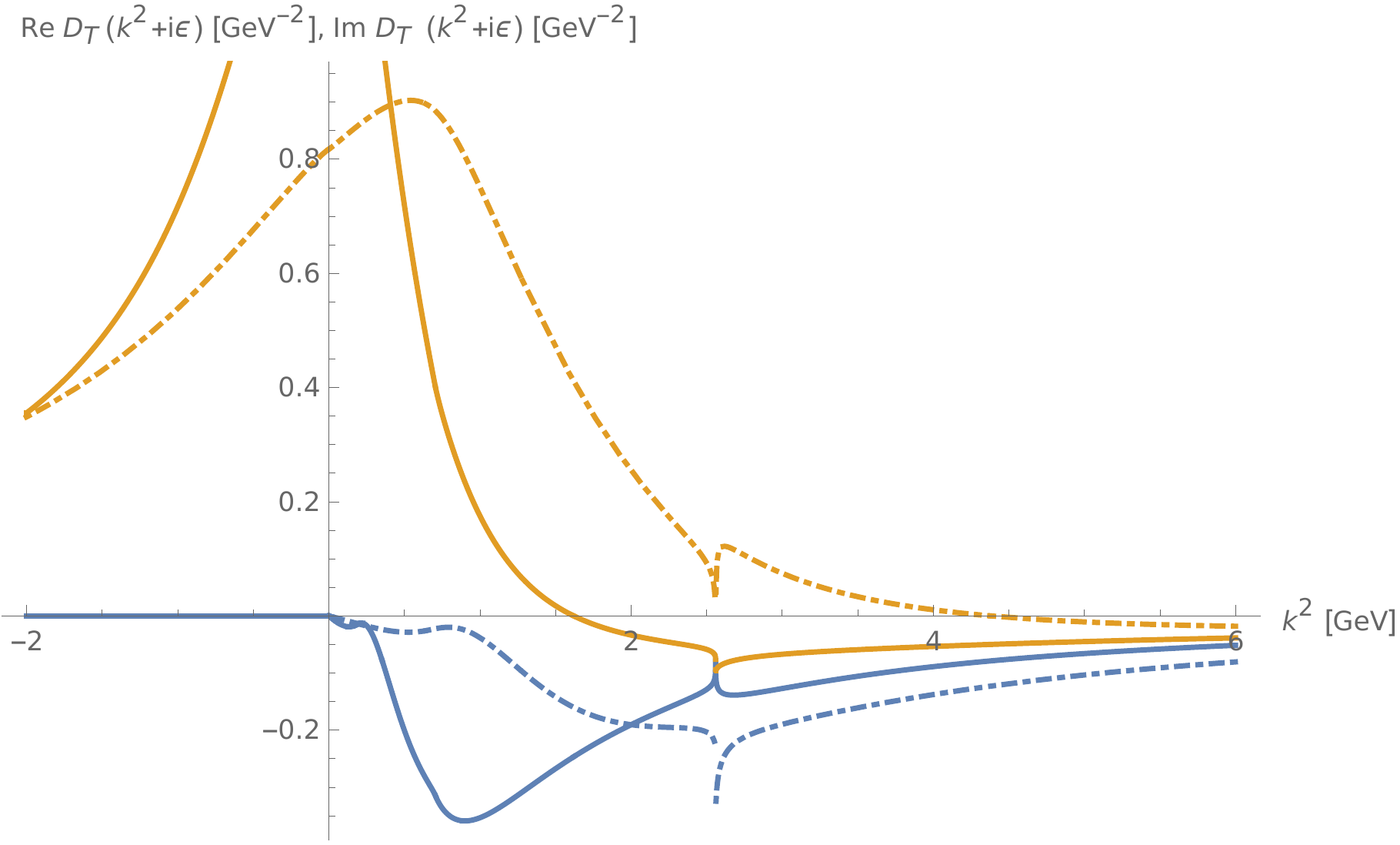}
  \end{center}
   \caption{The real and imaginary parts of the gluon propagator at $\mu_q = 0.8$ GeV for $M = 0.42$ GeV and $M = 0.8$ GeV. The solid curves are those of $M = 0.42$ GeV, which was regarded as the effective gluon mass in the vacuum. Those of $M = 0.8$ GeV are represented by dashed-dotted ones. The other parameters are $(g,m_q) = (4.5,0.13 \textrm{ GeV})$ and $N_F = 2$.} 
    \label{fig:inmedium_08gev.pdf}
\end{figure}

 \begin{figure}[tb]
  \begin{center}
   \includegraphics[width=0.9\linewidth]{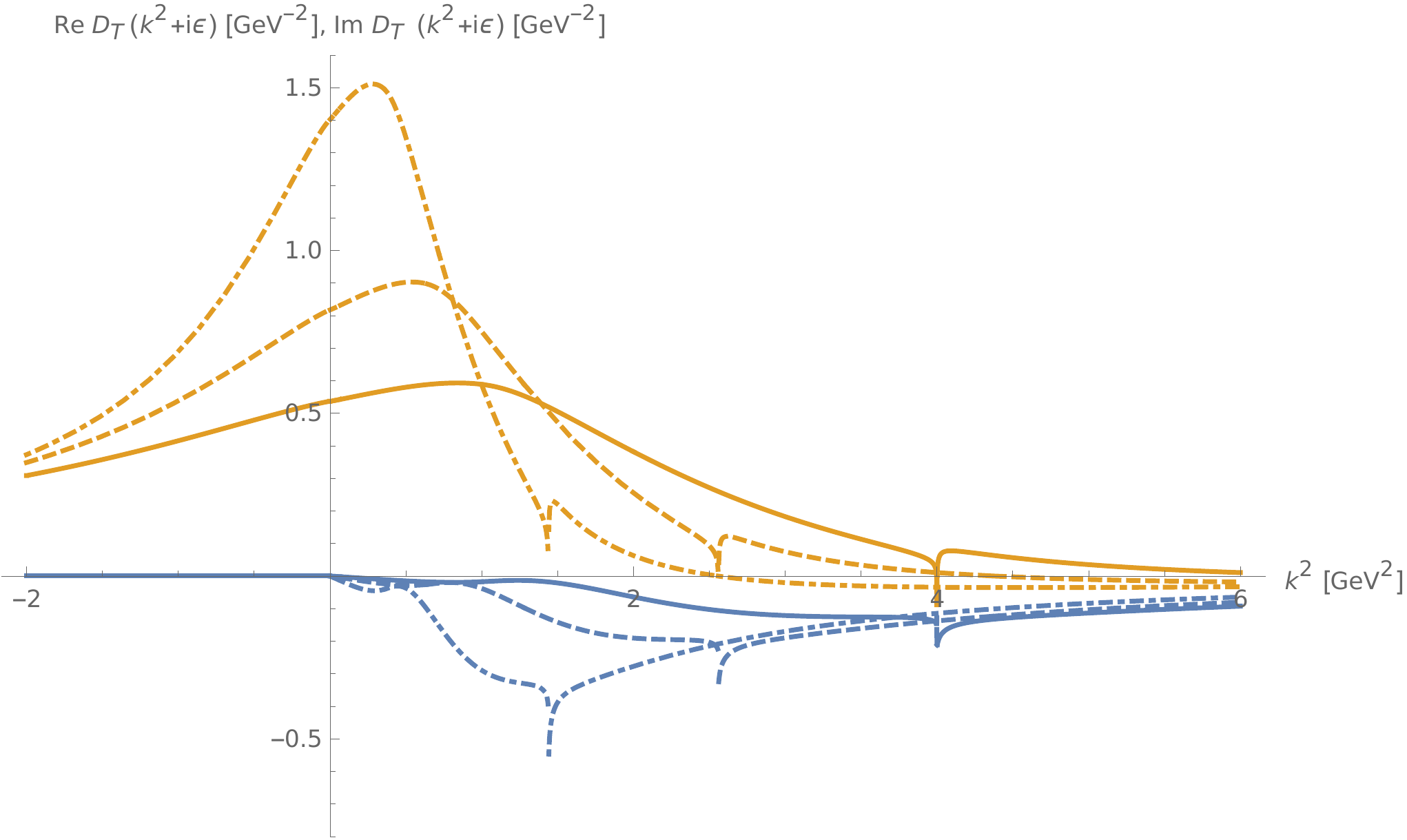}
  \end{center}
   \caption{The real and imaginary parts of the gluon propagator at $M=\mu_q = 0.6,0.8,1.0$ GeV are plotted as dashed-dotted, dashed, and solid curves, respectively. The other parameters are $(g,m_q) = (4.5,0.13 \textrm{ GeV})$ and $N_F = 2$.} 
    \label{fig:largemu1.pdf}
\end{figure}

Beforehand, let us see how the in-medium modification of the effective gluon mass affects the analytic structure.
The real and imaginary parts of the gluon propagator at $\mu_q = 0.8$ GeV for $M = 0.42 \mathrm{~GeV}$ and $M = 0.8 \mathrm{~GeV}$ are plotted in Fig.~\ref{fig:inmedium_08gev.pdf}.
The change of $M$ does not largely modify the location of the spectral peak, $k_0^2 \approx (2 \mu_q)^2$, while the direction of the peak is inverted.

The real and imaginary parts of the gluon propagator at $M=\mu_q = 0.6,0.8,1.0$ GeV are plotted in Fig.~\ref{fig:largemu1.pdf}. The spectral function has a negative peak at $k_0^2 \approx (2 \mu_q)^2$. The magnitude of this peak decreases as $\mu_q$ increases. The gluon propagator has one pair of complex conjugate poles as the vacuum one. The effect of the chemical potential around $k_0 \approx 2 \mu_q$ is less significant for large $\mu_q$ in this model.

For complex poles, we have numerically confirmed $N_P = 2$ in this set up for $\mu_q \sim 0.6$ -- $1$ GeV as inferred from Fig.~\ref{fig:typical_windingnumber}. Their positions $k_0 = \omega_R + i \omega_I,~(\omega_R > 0 , ~ \omega_I > 0)$ are plotted in Fig.~\ref{fig:largemu_estimate_pole_traj}. The apparent linearity of $\omega_R$ and $\omega_I$ with respect to $\mu_q (= M)$ suggests that $M$ and $\mu_q$ are the dominating scales in the propagator. A comparison with Fig.~\ref{fig:pole_traj} indicates that the in-medium modification of the gluon mass makes $\omega_R$ and $\omega_I$ larger. 

 \begin{figure}[tb]

 \begin{minipage}{\hsize}
  \begin{center}
   \includegraphics[width=0.9\linewidth]{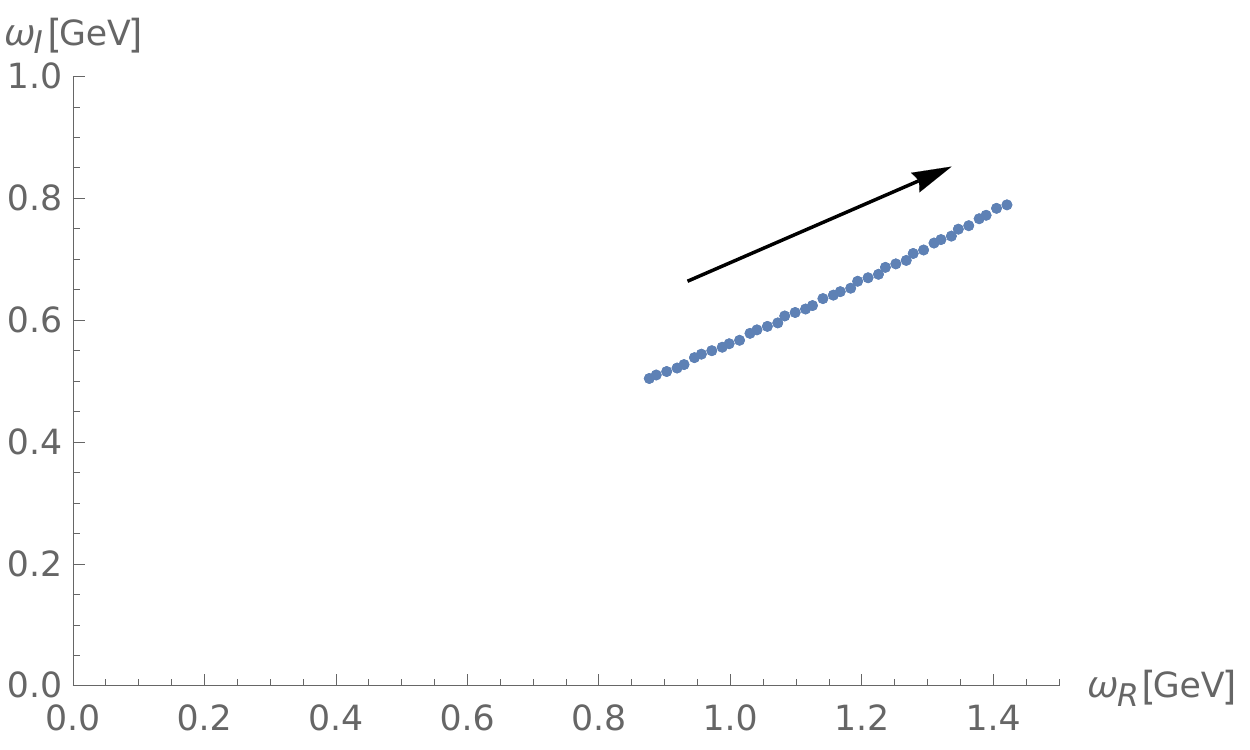}
  \end{center}
  \end{minipage}

 \begin{minipage}{\hsize}
  \begin{center}
   \includegraphics[width=0.9\linewidth]{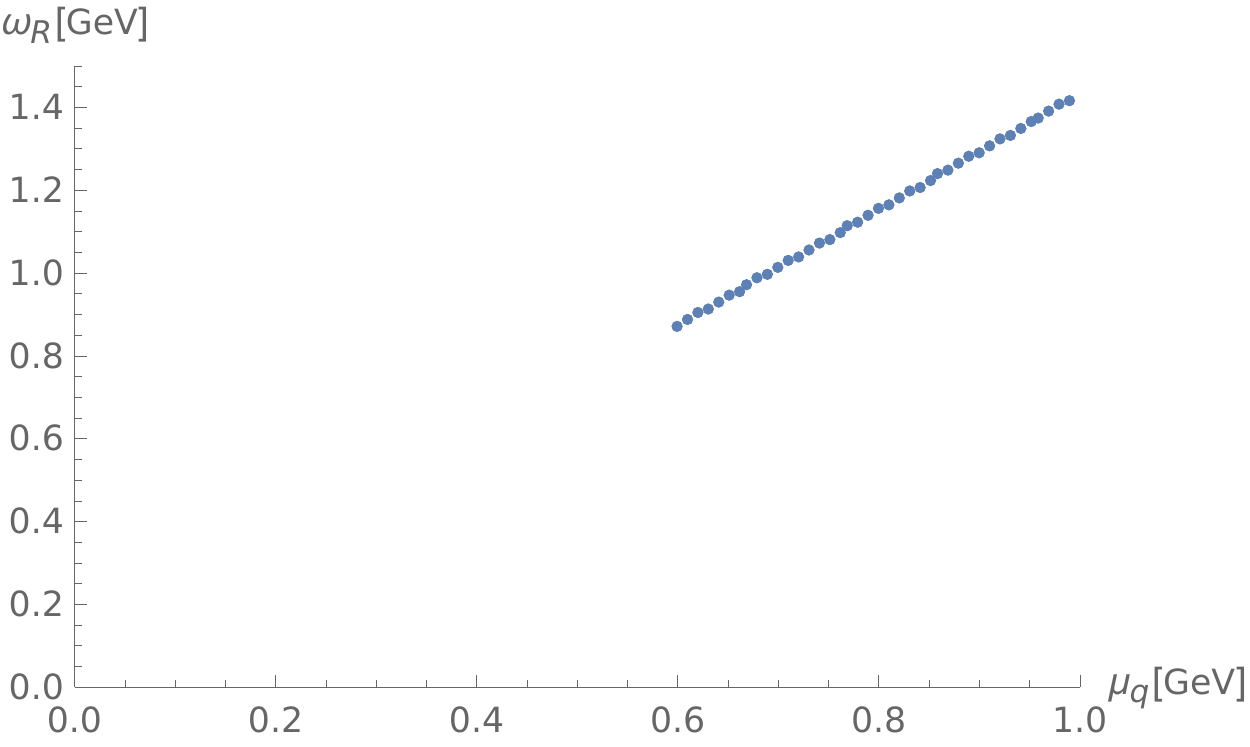}
  \end{center}
\end{minipage}

 \caption{Top panel: trajectory of a complex pole $k_0 = \omega_R + i \omega_I, ~ (\omega_R >0,~\omega_I > 0)$ of the gluon propagator in the plane $(\omega_R,\omega_I)$ at $(g,m_q) = (4.5, 0.13 \textrm{ GeV})$ with varying $\mu_q = M$ from 0.6 to 1 GeV. As $\mu_q$ increases, the pole moves along the arrow.
 Bottom panel: $\mu_q$ dependence of the real part $\omega_R$ of location of a complex pole.}
    \label{fig:largemu_estimate_pole_traj}
\end{figure}


\subsection{Comments on the infrared problems at finite temperature}
Finally, let us add remarks on the use of the naive perturbation theory in the massive Yang-Mills model in the thermal context.

We have investigated the gluon propagator by using the perturbation theory of the massive Yang-Mills theory.
However, it is well known that the naive perturbation theory of the Yang-Mills theory breaks down at finite-temperature. Therefore, some ressumation procedure is usually required to reach the infrared region in the thermal context, including cases at the finite chemical potential.
 Such a breakdown of the perturbation theory stems from the masslessness of the gluon. Indeed, due to the infrared singularities in the usual case, the generated magnetic gluon mass proportional to $g^2$ appears in denominators of terms of the perturbative series, which ruins the expansion in $g^2$. This breakdown is well known as the Linde problem \cite{Linde80}.

On the other hand, the massive Yang-Mills model does not present such a manifest breakdown of perturbation theory since the gluon mass regulates these infrared problems. Therefore, although resummation procedures may improve the results, the naive one-loop propagator of this model could already capture essential aspects of the Landau-gauge gluon propagator in QCD.

\section{Summary and future prospects}

Let us summarize our findings.
We have performed complex analyses of the gluon propagator at nonzero quark chemical potential $\mu_q$ in the long-wavelength limit $\vec{k} \rightarrow 0$, by using the massive Yang-Mills model.
We have verified that the two conditions, (i) $D(z) \rightarrow 0$ as $|z| \rightarrow \infty$ and (ii) $D(z)$ is holomorphic except for the real axis and a finite number of complex poles, are sufficient to single out the correct analytic continuation of a Matsubara propagator. Therefore, the uniqueness of the analytic continuation guaranteed in a similar sense as \cite{Baym-Mermin} even if we allow the existence of complex poles. For the proof, see Appendix A.

We have found that there is a $N_P = 4$ region, where the gluon propagator has two pairs of complex conjugate poles with respect to the complex variable $z^2 = k_0^2$. In this region, a new pair appears near the real axis in addition to the other pair similar to that in the vacuum case. At the typical parameters (Fig.~\ref{fig:typical_windingnumber}), the $N_P = 4$ region appears for light quarks ($m_q \lesssim 0.30$ GeV). As the quark chemical potential $\mu_q$ increases, the number of complex poles becomes four ($N_P = 4$) at slightly above the quark mass $m_q$ and backs to two ($N_P = 2$) at $\mu_q \approx 0.8 M \approx  0.33$ GeV. 
This structure is not sensitive to details of choice of the parameters $(g,M)$ as shown in Appendix C.
Moreover, in this $N_P = 4$ region, the new pair of complex conjugate poles has quite small imaginary part, and its location is approximately $\operatorname{Re} k_0^2 \approx (2 \mu_q)^2$.
On the other hand, in the $N_P = 2$ region, the gluon propagator behaves less ``particlelike'' with larger ratio $\omega_I/\omega_R$ of the complex pole at $k_0 = \omega_R + i \omega_I$, as $\mu_q$ increases.

The chemical potential influences the gluon propagator significantly around $k_0 \approx 2 \mu_q$, where the new poles appear and the spectral peak is observed. We can attribute this to the facts (i) it is the least energy for the quark pair production to occur at $\vec{k} = 0$ and (ii) the quark loop dominates in the energy scale less than the gluon mass $M$.

Finally, we can interpret the new almost real poles in two ways.
First, the results may imply that the gluon propagator indeed has a new pair of complex poles. This suggests a transition in confined degrees of freedom involving the gluon.
Second, the almost real pole may be an artifact of ``the one-loop approximation'' (\ref{eq:approx_oneloop}). Then, the gluon propagator would have a  long-lived quasiparticle spectral peak instead of the confined complex pole, which suggests a quasiparticle picture of the in-medium gluon.
Note that, however, the $N_P = 4$ region still remains in the RG-improved results (Fig.~\ref{fig:RG-winding}).

To sum up, although the gluon propagator presents only mild changes on the Euclidean side \cite{BHMS19}, it might have a rich and interesting structure in the complex frequency plane.

As future prospects, there is plenty of room for improvement in the present work in many aspects.
First, this work does not take into account the quark condensation, which is expected to be essential in the highly dense quark matter. The effect on the analytic structure of the quark gap would be interesting.  Second, as remarked in the introduction, the one-loop level is not enough in the quark sector of the massive Yang-Mills model. A possible improvement is the double expansion that improves the quark mass function significantly \cite{PRSTW17}.
Third, while a fair agreement with lattice results can be obtained by making the gluon mass $M$ depend on $\mu_q$ \cite{Suenaga-Kojo19}, the medium modification of the effective gluon mass should be determined in a more systematic way.
Fourth, since the massive Yang-Mills model has the infrared safe renormalization scheme, it would be important to compare the RG improved Euclidean gluon propagator with the lattice one. This could improve the current unsatisfactory agreement.
Lastly, when using lattice results, we have to keep in mind that the lattice gluon propagator has non-negligible systematic errors, e.g., finite lattice-spacing effect at low momenta \cite{DOS16}, and how Gribov copies affect results because there is no reason of the coincidence between the minimal Landau gauge and the Euclidean version of Landau gauge of the well-known covariant operator formalism due to the Gribov ambiguity. 

For other directions, it would be interesting to introduce temperature and to consider the physical sector and its transport properties in the massive Yang-Mills model and compare them with other approaches, e.g., \cite{Solana:2018pbk}.
Although it is very difficult, it is important to discuss implications of complex poles in the physical sector. The corresponding state should be confined and not itself have any physical impact, but its composite state might have physical significance \cite{BDGHSVZ10}. Formal aspects of complex poles will be discussed in a future work.

\section*{Acknowledgements}
We are grateful to Etsuko Itou and Daiki Suenaga for valuable comments.
Y.~H. is supported by JSPS Research Fellowship for Young Scientists Grant No.~20J20215, and K.-I.~K. is supported by Grant-in-Aid for Scientific Research, JSPS KAKENHI Grant (C) No.~19K03840.

\appendix

\section{Uniqueness of analytic continuation of the Matsubara propagator with complex poles}

In the absence of complex singularities, a theorem for the uniqueness of analytic continuation of the Matsubara propagator is well-known and proved in \cite{Baym-Mermin}. In this appendix, we shall extend the theorem to propagators with complex poles. 

In practice, we have not faced the problem of the uniqueness of the analytic continuation as we have employed the gluon propagator at zero temperature $T=0$. However, the propagator at finite chemical potential is the low temperature limit $T \rightarrow 0$ of the Matsubara propagator at finite temperature; it is conceptually essential to establish the uniqueness of the analytic continuation of the Matsubara propagator.\\

\begin{theorem}
Let $D(z)$ be a complex function whose values at Matsubara frequencies $z = i \omega_n := i \frac{2 \pi n}{\beta}$ are given. Then, its analytic continuation $D(z)$ to the whole complex $z$ plane is unique provided that an analytic continuation satisfies the following conditions,
\begin{enumerate}
    \item $D(z) \rightarrow 0$ if $|z| \rightarrow \infty$,
    \item $D(z)$ is holomorphic except for the real axis and a finite number of complex poles. 
\end{enumerate}
\end{theorem}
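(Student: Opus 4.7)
The plan is to extend the classical Baym-Mermin theorem (the case without complex poles) by first clearing the finitely many complex poles with a polynomial multiplier, and then applying a Blaschke-product argument in each open half-plane separately.

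First, I would suppose two distinct analytic continuations $D_1,D_2$ both satisfy (i) and (ii) and coincide on $\{i\omega_n\}$; setting $f:=D_1-D_2$, the problem reduces to showing $f\equiv 0$. Let $\{z_j\}_{j=1}^L$ denote the union of the two pole sets with multiplicities $m_j$; none of the $z_j$ can sit on $\{i\omega_n\}$ because the Matsubara values of $D_{1,2}$ are prescribed to be finite. Define $P(z):=\prod_{j=1}^L(z-z_j)^{m_j}$ and $g(z):=P(z)f(z)$. Then $g$ is holomorphic on $\mathbb{C}\setminus\mathbb{R}$, of polynomial growth of some degree $M$ at infinity, and satisfies $g(i\omega_n)=0$ for every $n\in\mathbb{Z}$.

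Next, I would pull back $g$ to the unit disk via the conformal map $\phi(w)=i(1-w)/(1+w)$ from $\mathbb{D}$ onto $\{\operatorname{Im}z>0\}$, so that the Matsubara points $i\omega_n$ with $n\geq 1$ map to $w_n=(1-\omega_n)/(1+\omega_n)\in(-1,0)$, accumulating at $w=-1$. A short computation yields $1-|w_n|\sim \beta/(\pi n)$, hence $\sum_n(1-|w_n|)=\infty$ already violates Blaschke's summability condition. Setting $G(w):=(1+w)^{M+1}g(\phi(w))$ absorbs the polynomial blow-up $|\phi(w)|^M\sim|1+w|^{-M}$ near $w=-1$, so $G$ is holomorphic on $\mathbb{D}$ with zero set containing $\{w_n\}_{n\geq 1}$. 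Blaschke's theorem then forces $G\equiv 0$, whence $g\equiv 0$ on the upper half-plane; the identical argument in the lower half-plane (using $n\leq -1$) gives $g\equiv 0$ there, and hence $f\equiv 0$ and $D_1=D_2$.

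The delicate point that I expect to be the main obstacle is confirming that $G$ lies in a function class to which Blaschke's theorem actually applies. The hypothesis $|D(z)|\to 0$ as $|z|\to\infty$ controls growth only in the far field and not as $\operatorname{Im}z\downarrow 0$, so a priori $g$ could blow up along the portion of $\partial\mathbb{D}$ corresponding to the real axis, and $G$ need not lie in $H^\infty(\mathbb{D})$. This can be overcome either by passing to the Nevanlinna-class version of Blaschke together with a $\log^+$ estimate on $\sup_{|w|=r}|G(w)|$ as $r\uparrow 1$, or by bypassing the conformal map entirely: writing the dispersion representation $g(z)=\frac{1}{2\pi i}\int_{\mathbb{R}}\frac{\Delta g(x)}{x-z}\,dx$ (no complex-pole terms survive, since $P$ cleared them), evaluating at $z=i\omega_n$, and deducing $\Delta g\equiv 0$ from the completeness of the Matsubara kernel $\{(x-i\omega_n)^{-1}\}_{n\in\mathbb{Z}}$ on $\mathbb{R}$ (essentially a M\"untz-type statement, since $\sum_n 1/\omega_n$ diverges).
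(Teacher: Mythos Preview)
Your approach via Blaschke products is genuinely different from the paper's, which follows a Carleman-theorem argument (Titchmarsh): one considers $I(R)=\oint_{C'}\frac{dz}{2\pi i}\bigl(\tfrac{1}{R^2}-\tfrac{1}{z^2}\bigr)\ln\varphi(z+i\epsilon)$ over a half-annulus $C'$ in the upper half-plane and evaluates $\operatorname{Im} I(R)$ two ways. Splitting along the contour shows $\operatorname{Im} I(R)$ is bounded \emph{above}, because $\varphi\to 0$ at infinity forces $\ln|\varphi|<0$ there and the weight factors $\tfrac{1}{x^2}-\tfrac{1}{R^2}$ and $\sin\theta$ are positive; integrating by parts and invoking the argument principle gives a sum over zeros $\sum_n\bigl(\tfrac{1}{\omega_n}-\tfrac{\omega_n}{R^2}\bigr)$ that diverges to $+\infty$. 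The finitely many poles contribute only $O(1)$ to this second evaluation, which is exactly where the generalization over Baym--Mermin sits.

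The obstacle you flag is real, and neither of your proposed fixes closes it under the stated hypotheses. For Blaschke you need $G\in N(\mathbb{D})$, i.e., $\sup_{r<1}\int_0^{2\pi}\log^+|G(re^{i\theta})|\,d\theta<\infty$. Condition (i) controls $G$ only near $w=-1$ (indeed $G(w)\to 0$ there), but the remainder of $\partial\mathbb{D}$ is the image of the real axis, where hypothesis (ii) permits arbitrary singularities; nothing prevents $\log^+|G|$ from failing integrability there. Your dispersion fallback presupposes that $\Delta g$ exists and is integrable against $(x-z)^{-1}$, which is again not in the hypotheses, and the ``completeness of $\{(x-i\omega_n)^{-1}\}$'' is not a citable theorem---proving it would itself require a Blaschke/Carleman argument.

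What makes the Carleman route succeed is precisely that only a \emph{one-sided} bound $\operatorname{Im} I(R)\le M$ is needed. This follows from (i) alone: the boundary integrals along $[\rho,R]+i\epsilon$ and $C_R$ are finite for each fixed $R$ (continuity of $\varphi$ on the line $\operatorname{Im} z=\epsilon>0$, safely away from the real-axis singularities) and can only tend to $-\infty$ as $R\to\infty$. Your conformal map folds the real-axis boundary and the point at infinity onto the same unit circle, converting a one-sided requirement into the two-sided Nevanlinna condition, for which there is no input. If you wish to rescue the strategy, drop the disk map and run the Jensen/Carleman estimate directly in the half-plane.
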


\begin{proof}


Let $D_1(z)$ and $D_2 (z)$ be two analytic continuations satisfying the above two conditions that coincide at all the Matsubara frequencies: $D_1 (i \omega_n) = D_2 (i \omega_n)$. Then, $\varphi(z) := D_1(z) - D_2(z)$ satisfies 
\begin{itemize}
    \item $\varphi(i \omega_n) = 0$ for all Matsubara frequencies $\omega_n$,
    \item $\varphi(z)$ is holomorphic except for the real axis and a finite number of complex poles,
    \item $\varphi(z) \rightarrow 0$ as $|z| \rightarrow \infty$.
\end{itemize}
We shall show that $\varphi(z)$ is identically zero, i.e., an assumption that $\varphi(z)$ had only isolated zeros leads to a contradiction. The proof is a straightforward generalization of a proof of the Carleman theorem given in Titchmarsh's book \cite{Titchmarsh}.

Consider the integral
\begin{align}
    I(R) := \oint_{C'} \frac{dz}{2 \pi i} \left( \frac{1}{R^2} - \frac{1}{z^2} \right) \ln \varphi(z + i \epsilon),
\end{align}
where the contour $C' = (\rho,R) \cup C_R \cup (-R,-\rho) \cup C_\rho$ is depicted in Fig.~\ref{fig: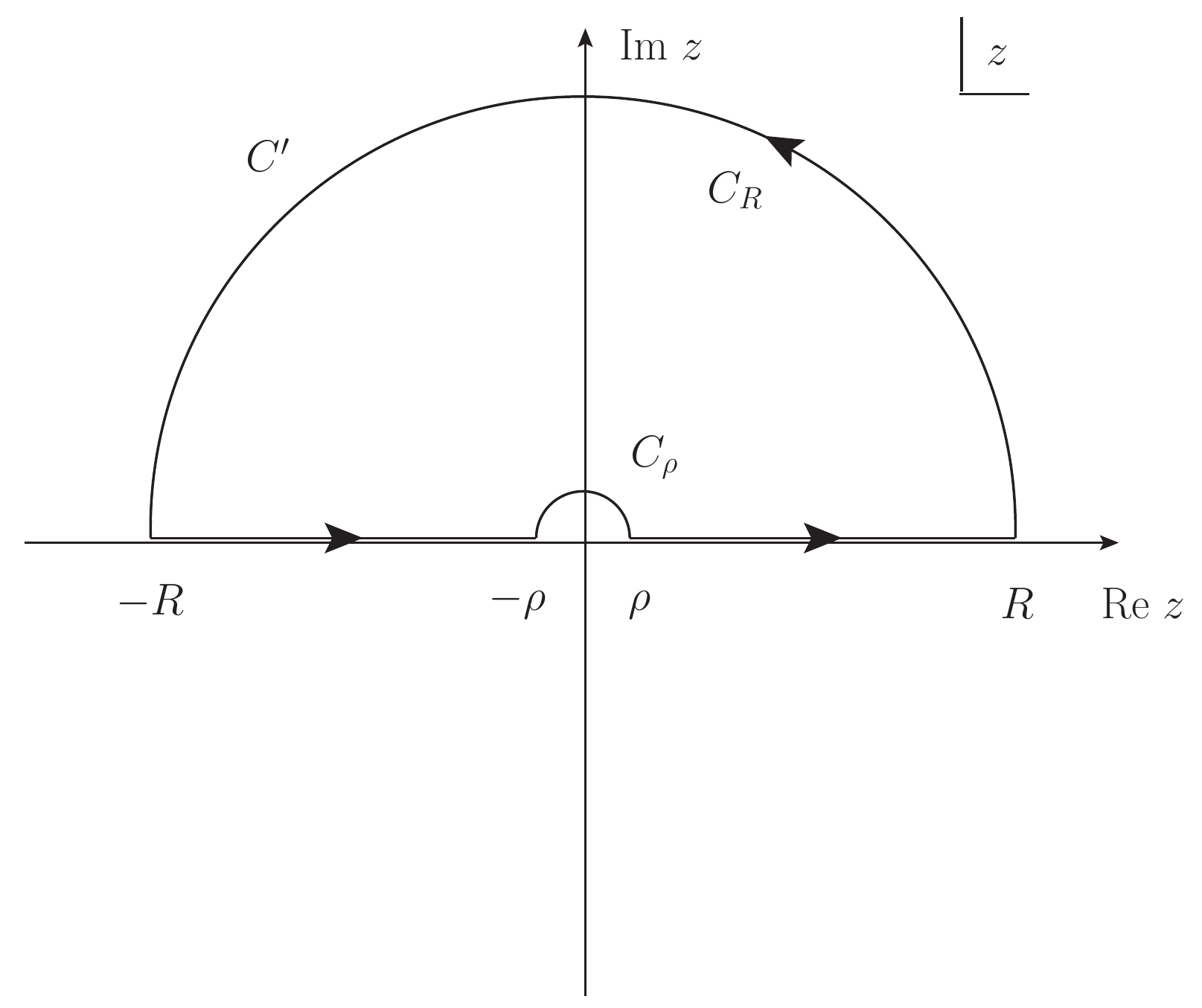} and $C_R = \{ z ;~ \operatorname{Im}z >0, |z| = R \}$ and $C_\rho= \{ z ;~ \operatorname{Im}z >0, |z| = \rho \}$ are the semicircles with counterclockwise and clockwise directions respectively.

 \begin{figure}[t]
  \begin{center}
   \includegraphics[width=\linewidth]{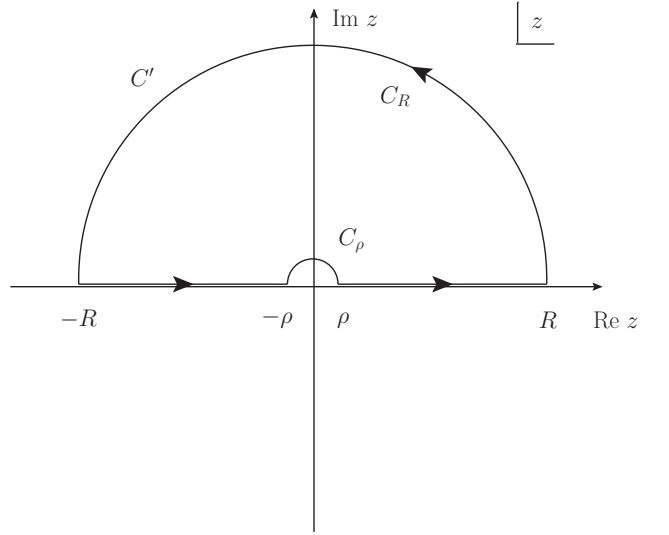}
  \end{center}
   \caption{The contour of the integral $I(R)$ consisting of lines $(-R,-\rho)$ and $(\rho,R)$ and semicircles $C_\rho$ and $C_R$: $C' = (\rho,R) \cup C_R \cup (-R,-\rho) \cup C_\rho$.} 
    \label{fig:appendix_complex.pdf}
\end{figure}

In this integral, we are going to keep $\rho$ finite and take a limit $R \rightarrow \infty$. From here on, we omit $+ i \epsilon$ for notational simplicity.

We take a sufficiently small $\rho$ (or appropriate choice of branch cuts of $\ln \varphi(z)$) so that $C_\rho$ does not intersect with any branch cut of the logarithm.

We evaluate this integral $\operatorname{Im} I(R)$ in two ways to obtain the contradiction.

First, we decompose the integral $I(R)$ into four pieces following $C' = (\rho,R) \cup C_R \cup (-R,-\rho) \cup C_\rho$,
\begin{align}
    I(R) = I_{\rho \rightarrow R} + I_{C_R} + I_{-R \rightarrow - \rho} + I_{C_\rho}.
\end{align}
Then, we have
\begin{align}
    I_{\rho \rightarrow R} + I_{-R \rightarrow - \rho} &= \int_\rho ^R \frac{dx}{2 \pi i} \left( \frac{1}{R^2} - \frac{1}{x^2} \right) \ln [ \varphi(x)\varphi(-x)],
\end{align}
and,
\begin{align}
    I_{C_R}  &= \int_{C_R} \frac{dz}{2 \pi i} \left( \frac{1}{R^2} - \frac{1}{z^2} \right) \ln \varphi(z) \notag \\
    &= \frac{i}{\pi R} \int_0 ^\pi d \theta ~ \sin \theta \ln \varphi(R e^{i\theta}).
\end{align}
Thus, we obtain
\begin{align}
    \operatorname{Im} I(R) &= \operatorname{Im} I_{C_\rho} \notag \\
    &+ \int_\rho ^R \frac{dx}{2 \pi } \left(  \frac{1}{x^2} - \frac{1}{R^2} \right) \ln | \varphi(x)\varphi(-x)| \notag \\
    &+ \frac{1}{\pi R} \int_0 ^\pi d \theta ~ \sin \theta \ln |\varphi(R e^{i\theta})|.
\end{align}
Note that $\operatorname{Im} I_{C_\rho}$ is $O(1)$ as $R \rightarrow \infty$. The other two integrals could diverge as $R \rightarrow \infty$; however, then, $\operatorname{Im} I(R)$ would be negative infinity, since $\varphi(z) \rightarrow 0$ as $|z| \rightarrow \infty$ and the other parts of the integrands are positive, $\left(  \frac{1}{x^2} - \frac{1}{R^2} \right)>0,~ \sin \theta > 0$. 
Therefore, $\operatorname{Im} I(R)$ is bounded from above: $\operatorname{Im} I(R) \leq M$ for some $M \in \mathbb{R}$.

On the other hand, the integral $I(R)$ is closely related to zeros and poles inside $C'$.
\begin{align}
    I(R) &= \oint_{C'} \frac{dz}{2 \pi i} \ln \varphi(z) \frac{d}{dz} \left( \frac{z}{R^2} + \frac{1}{z} \right) \notag \\
    &= \oint_{C'} \frac{dz}{2 \pi i} \frac{d}{dz} \left[ \ln \varphi(z)  \left( \frac{z}{R^2} + \frac{1}{z} \right) \right] \notag \\
    &- \oint_{C'} \frac{dz}{2 \pi i} \frac{\varphi'(z)}{\varphi(z)} \left( \frac{z}{R^2} + \frac{1}{z} \right). 
\end{align}

The first integral sums up ``discontinuities'' from the branch cuts of the logarithm.
Since we have assumed that the branch cuts of the logarithm do not intersect with $C_\rho$, the first term contributes only from $(\rho,R) \cup C_R \cup (-R,-\rho)$, on which $\left( \frac{z}{R^2} + \frac{1}{z} \right)$ is real.
Therefore,
\begin{align}
    \operatorname{Im} \oint_{C'} \frac{dz}{2 \pi i} \frac{d}{dz} \left[ \ln \varphi(z)  \left( \frac{z}{R^2} + \frac{1}{z} \right) \right] = 0. 
\end{align}

Finally, the second term can be evaluated as a weighted sum of zeros and poles. The generalized argument principle yields
\begin{align}
    - \oint_{C'} &\frac{dz}{2 \pi i} \frac{\varphi'(z)}{\varphi(z)} \left( \frac{z}{R^2} + \frac{1}{z} \right) \notag \\
    &= - \sum_{\substack{z_j:\mathrm{zeros} \\ z_j \in \mathcal{D}'} } \left( \frac{z_j}{R^2} + \frac{1}{z_j} \right) + \sum_{\substack{w_k:\mathrm{poles} \\ w_k \in \mathcal{D}'} } \left( \frac{w_k}{R^2} + \frac{1}{w_k} \right),
\end{align}
where $\mathcal{D}'$ is the region surrounded by $C'$.
To sum up,
\begin{align}
    \operatorname{Im} I(R) &= \sum_{\substack{z_j:\mathrm{zeros} \\ z_j \in \mathcal{D}'} } \left( \frac{1}{r_j} - \frac{r_j}{R^2} \right) \sin \theta_j + O(1),
\end{align}
where we have defined $r_j e^{i \theta_j} := z_j$, used the finiteness of the number of poles, and $O(1)$ stands for a finite term for all $R$. As $\varphi(i \omega_n) = 0$ for all Matsubara frequencies and $ \left( \frac{1}{r_j} - \frac{r_j}{R^2} \right) > 0$ for $z_j \in \mathcal{D}'$, 
\begin{align}
    \sum_{\substack{z_j:\mathrm{zeros} \\ z_j \in \mathcal{D}'} } \left( \frac{1}{r_j} - \frac{r_j}{R^2} \right) \sin \theta_j \geq \sum_{\substack{n \\ i \omega_n \in \mathcal{D}'} } \left( \frac{1}{\omega_n} - \frac{\omega_n}{R^2} \right)
\end{align}
Moreover, as $R \rightarrow \infty$,
\begin{align}
    \sum_{\substack{n \\ i \omega_n \in \mathcal{D}'} }  \frac{\omega_n}{R^2} = \sum_{0< \omega_n < R }  \frac{\omega_n}{R^2} = O(1).
\end{align}
These results indicate
\begin{align}
    \operatorname{Im} I(R) &> \sum_{0< \omega_n < R } \frac{1}{\omega_n}+ O(1) \rightarrow +\infty,
\end{align}
which contradicts the first evaluation: $\operatorname{Im} I(R)$ is bounded above. The assumption that $\varphi(z)$ had only isolated zeros is false. Therefore, $\varphi(z) = D_1(z) - D_2(z)$ is identically zero at least for the upper-half plane. In the same way, $\varphi = 0$ in the lowest-half plane follows by taking $-z$ as $z$. This completes the proof.
\end{proof}

Incidentally, let us comment on the possibility of branch cuts.
The uniqueness holds even if we allow the propagator to have a finite number of (nonclosed) branch cuts that have finite length and represent finite discontinuities of $D(z)$.
Then, $\varphi(z)= D_1(z) - D_2(z)$ could have branch cuts in addition to poles. We can still prove $\varphi = 0$ by (i) deforming $C'$ to avoid the branch cuts and (ii) taking the branch cuts of $\ln \varphi(z)$ so that they intersect with neither $C_\rho$ nor the path wrapping around the new branch cuts of $\varphi$.

Indeed, the first evaluation becomes
\begin{align}
    I(R) &= I_{\rho \rightarrow R} + I_{C_R} + I_{-R \rightarrow - \rho} + I_{C_\rho} + \sum_{\gamma: cuts} I_\gamma, \notag \\
    I_\gamma &= \int_{\gamma'} \frac{dz}{2 \pi i} \left( \frac{1}{R^2} - \frac{1}{z^2} \right) \ln \varphi(z),
\end{align}
where $\gamma'$ is a path that surrounds a cut $\gamma$ in $|z| < R$.
This new contribution is finite for any $R$ due to the finiteness of the branch cuts. 

On the other hand, the second evaluation by the partial integration is the same as before, which leads to a contradiction again. Therefore, the conclusion is not changed in the presence of discontinuities on curves of finite length.

\section{One-loop expressions for the vacuum part}
Here, we present the one-loop expression for $\Pi^{vac}_{\mu \nu} (k_E) = \Pi^{vac} (k_E^2) P_{\mu \nu} = M^2 \hat{\Pi}^{vac} (\frac{k_E^2}{M^2}) P_{\mu \nu}$.

Beforehand, we rewrite the two-point vertex functions $\Gamma_{{\mathscr A},vac}^{(2)}$ and $\Gamma_{gh,vac}^{(2)}$ by dimensionless gluon and ghost vacuum polarizations $\hat{\Pi}$ and $\hat{\Pi}_{gh}$ as 
\begin{align}
\Gamma_{{\mathscr A}, vac}^{(2)}(k_E^2) &= M^2 [s+1 + \hat{\Pi}(s) + s \delta_Z + \delta_{M^2}] \notag \\
&=: M^2 [s+1 + \hat{\Pi}^{vac}(s)], \label{eq:gluon_vertex_strict_one_loop}\\
\Gamma_{gh, vac}^{(2)}(k_E^2) &:= - [\Delta_{gh} (k_E^2)]^{-1}  \notag \\
&= M^2 [s + \hat{\Pi}_{gh}(s) + s \delta_C] \notag \\
&=: M^2 [s + \hat{\Pi}_{gh}^{ren}(s) ],
\end{align}
where $k_E$ is the Euclidean momentum, $s = \frac{k_E^2}{M^2}$, and $\delta_Z:= Z_A - 1$, $\delta_{M^2} := Z_A Z_{M^2} -1 $, and $\delta_C:= Z_C -1$ are the counterterms.

The bare vacuum polarizations computed by the dimensional regularization read \cite{TW11,PTW14}, for gluons,
\begin{align}
\hat{\Pi}(s) &=\hat{\Pi}_{YM}(s) + \hat{\Pi}_{q}(s), \\ 
\hat{\Pi}_{YM}(s) &= \frac{g^2 C_2 (G)}{192 \pi^2} s \Biggl\{ \left( \frac{9}{s} - 26 \right) \left[ \varepsilon^{-1} + \ln (\frac{4 \pi}{M^2 e^\gamma})  \right] \notag \\
&- \frac{121}{3} + \frac{63}{s} + h(s) \Biggr\}, \notag \\
\hat{\Pi}_{q}(s) &= - \frac{g^2 C(r)}{6 \pi^2} s \Biggl\{ - \frac{1}{2} \left[ \varepsilon^{-1} + \ln (\frac{4 \pi}{m_q^2 e^\gamma}) \right] \notag \\
&- \frac{5}{6} + h_q \left( \frac{\xi}{s} \right)  \Biggr\},
\end{align}
for ghosts,
\begin{align}
\hat{\Pi}_{gh}(s) &= \frac{g^2 C_2 (G)}{64 \pi^2} s \biggl[ -3 \left[ \varepsilon^{-1} + \ln (\frac{4 \pi}{M^2 e^\gamma}) \right] \notag \\
& -5 + f(s) \biggr], 
\end{align}
where $\varepsilon := 2 - D/2$, $\gamma$ is the Euler-Mascheroni constant, $C_2(G)$ and $C(r) = N_F/2$ are the Casimir invariants of the adjoint and fundamental (with multiplicity $N_F$) representations of the gauge group $G$, $\xi = \frac{m_q^2}{M^2}$ and,
\begin{align}
h(s) &:= - \frac{1}{s^2} + \left( 1- \frac{s^2}{2} \right) \ln s \notag \\
&+ \left( 1+ \frac{1}{s}\right)^3 (s^2 - 10s + 1) \ln (s+1) \notag \\
&+ \frac{1}{2} \left( 1+ \frac{4}{s} \right)^{3/2} (s^2 - 20 s + 12) \ln \left( \frac{\sqrt{4+s} - \sqrt{s}}{\sqrt{4+s} + \sqrt{s}} \right), \notag \\
h_q(\tilde{t}) &:= 2\tilde{t} + (1-2 \tilde{t}) \sqrt{4 \tilde{t} + 1} \coth^{-1} (\sqrt{4 \tilde{t} +1}), \notag \\
   f(s) &:= - \frac{1}{s} - s \ln s + \frac{(1+s)^3}{s^2} \ln (s+1),
\end{align}
with $\tilde{t} := \frac{\xi}{s} = \frac{m_q^2}{k_E^2}$.

The renormalization conditions (\ref{eq:TWrenomalization}) for the gluon and ghost sector can be cast into in the one-loop level,
\begin{align}
 \begin{cases}
 Z_A Z_C Z_{M^2} = 1 \\
 \Gamma_{{\mathscr A},vac}^{(2)} (k_E^2 = \mu^2) = \mu^2 + M^2\\
 \Gamma_{gh,vac}^{(2)}(k_E^2 = \mu^2) = \mu^2 
 \end{cases}
 \Leftrightarrow \ 
 \begin{cases}
 \delta_C + \delta_{M^2} = 0 \\
 \hat{\Pi}^{vac}(s = \nu) = 0 \\
 \hat{\Pi}_{gh}^{ren} (s = \nu) = 0,
 \end{cases}
\end{align}
with $\nu := \frac{\mu^2}{M^2}$.

By imposing this renormalization condition, we have the renormalized two-point vertex functions,
\begin{align}
\hat{\Pi}^{vac}(s) &=\hat{\Pi}_{YM}^{ren.}(s) + \hat{\Pi}_{q}^{ren.}(s), \label{eq:vacuum_pol_TW}\\
\hat{\Pi}^{ren.}_{YM}(s) &= \frac{g^2 C_2 (G)}{192 \pi^2} s \biggl[ \frac{48}{s} + h(s) + \frac{3 f(\nu)}{s} - (s \rightarrow \nu) \biggr], \label{eq:vacuum_pol_TW_YM} \\
\hat{\Pi}^{ren.}_{q}(s) &=  - \frac{g^2 C(r)}{6 \pi^2} s \left[ h_q \left( \frac{\xi}{s} \right) - h_q \left( \frac{\xi}{\nu} \right)  \right].
\end{align}
Note that the gluon propagator at $T = \mu = 0$ exhibits the decoupling feature and satisfies the condition (ii) of Sec.~II~B:
\begin{align}
\hat{\Pi}^{vac}(s = 0) &> 0, \label{eq:pos_IR_prop_cond}  \\
\Rightarrow \Gamma_{{\mathscr A},vac}^{(2)}& (k_E^2 = 0) = M^2 [1 + \hat{\Pi}^{vac} (0) ] > 0. 
\end{align}
Indeed, we have
\begin{align}
\hat{\Pi}^{ren.}_{YM}(s=0) &= \frac{g^2 C_2 (G)}{192 \pi^2} \Biggl[ 3 f(\nu) - \frac{15}{2} \Biggr]>0, \notag \\
\hat{\Pi}^{ren.}_{q}(s=0) &= 0, 
\end{align}
where we have used $h_q(\tilde{t} \rightarrow \infty) = O(1)$, $h(s) = - \frac{111}{2s} + O(\ln s)$, $f(0) = 5/2$, and the fact that $f(s)$ increases monotonically in $s$.

Note also that the strict one-loop expression has the following asymptotic form in the limit $|k^2| \rightarrow \infty$:
 \begin{align}
\Gamma_{{\mathscr A},vac}^{(2)} \simeq g^2 \gamma_0 (-k^2) \ln |k^2| + O(k^2), \label{eq:gluon_asymptotic_UV_one_loop}
\end{align}
while the asymptotic freedom and RG analysis yields
\begin{align}
\Gamma_{{\mathscr A},vac}^{(2)} \simeq Z_{UV}^{-1}(-k^2) (\ln |k^2|)^{\gamma_0/\beta_0},
\end{align}
where we have analytically continued the gluon propagator from the Euclidean momentum $k^2 = - k_E^2$ to complex $k^2$, $Z_{UV} > 0$ is a positive constant, and $\gamma_0$ and $\beta_0$ are respectively the first coefficients of the gluon anomalous dimension and the beta function:
\begin{align}
\gamma_0 &= - \frac{1}{16 \pi^2} \left( \frac{13}{6} C_2 (G) - \frac{4}{3} C(r) \right),\notag \\
\beta_0 &= - \frac{1}{16 \pi^2} \left( \frac{11}{3} C_2 (G) - \frac{4}{3} C(r) \right).
\end{align}
Both the strict one-loop gluon propagator and RG improved one satisfy the condition (i) of Sec.~II~B.
In spite of the wrong logarithmic exponent, the one-loop gluon propagator has qualitatively the same phase as the RG improved one (for $N_F < 10$).
Thus, the wrong logarithmic exponent will not change the value of $N_W(C) = N_Z - N_P$, and hence the strict one-loop expression may be enough for our purpose.

\section{Number of complex poles with various $(g,M)$}

In the main text, we have investigated the analytic structure of the gluon propagator with the fixed parameters $g = 4.5$ and $M = 0.42$ GeV, as they give best-fit parameters to the lattice results \cite{PTW14}. In this appendix, we check that the qualitative features of the analytic structure are not sensitive to the model parameters $(g,M)$.

We have confirmed that the contour plots of $N_P$ on the $(\zeta = \frac{\mu_q^2}{M^2}, \xi = \frac{m_q^2}{M^2})$ plane are qualitatively same. Indeed, Fig.~\ref{fig:different_g} gives contour plots of $N_P$ at $M = 0.42$ GeV and $g = 3$ (top) and $g = 8$ (bottom). Figure~\ref{fig:different_M} gives contour plots of $N_P$ at $g = 4.5$ and $M = 0.3$ GeV (top) and $M = 0.8$ GeV (bottom).
The setup of the numerical calculations is the same as Fig.~\ref{fig:typical_windingnumber}. 
Similar to Fig.~\ref{fig:typical_windingnumber}, the left boundary (small-$\zeta$ side of the boundary) is near $\mu_q \sim m_q$ and the right boundary (large-$\zeta$ side of the boundary) at $\mu_q^2 \approx 0.6 M^2$, at least within the parameter region $3 \lesssim g \lesssim 8$ and $ 0.3 \mathrm{~GeV} \lesssim M \lesssim 0.8$ GeV.

 \begin{figure}[t]
 
  \begin{minipage}{\hsize}
  \begin{center}
   \includegraphics[width=0.8\linewidth]{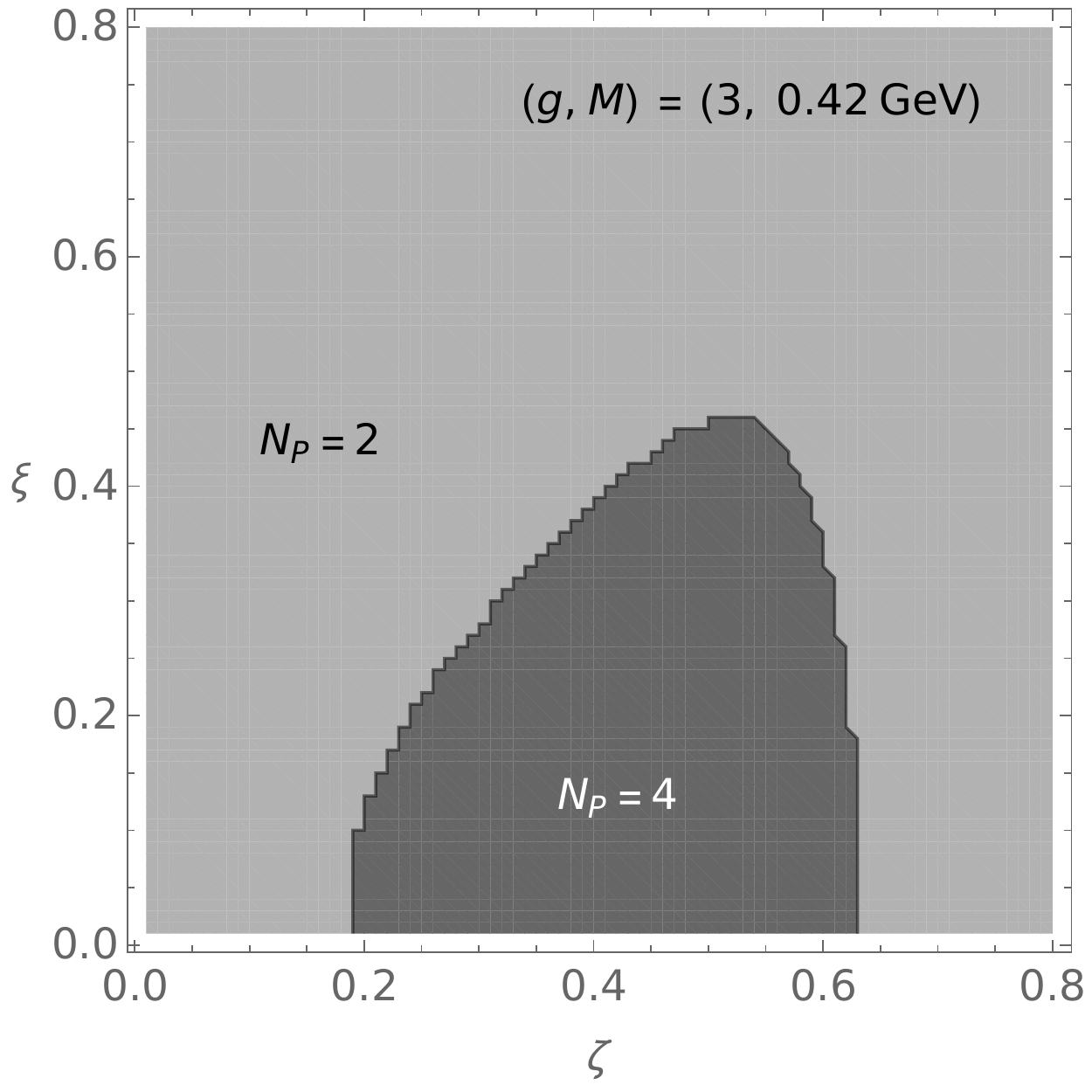}
  \end{center}
 \end{minipage}
 
  \begin{minipage}{\hsize}
  \begin{center}
   \includegraphics[width=0.8\linewidth]{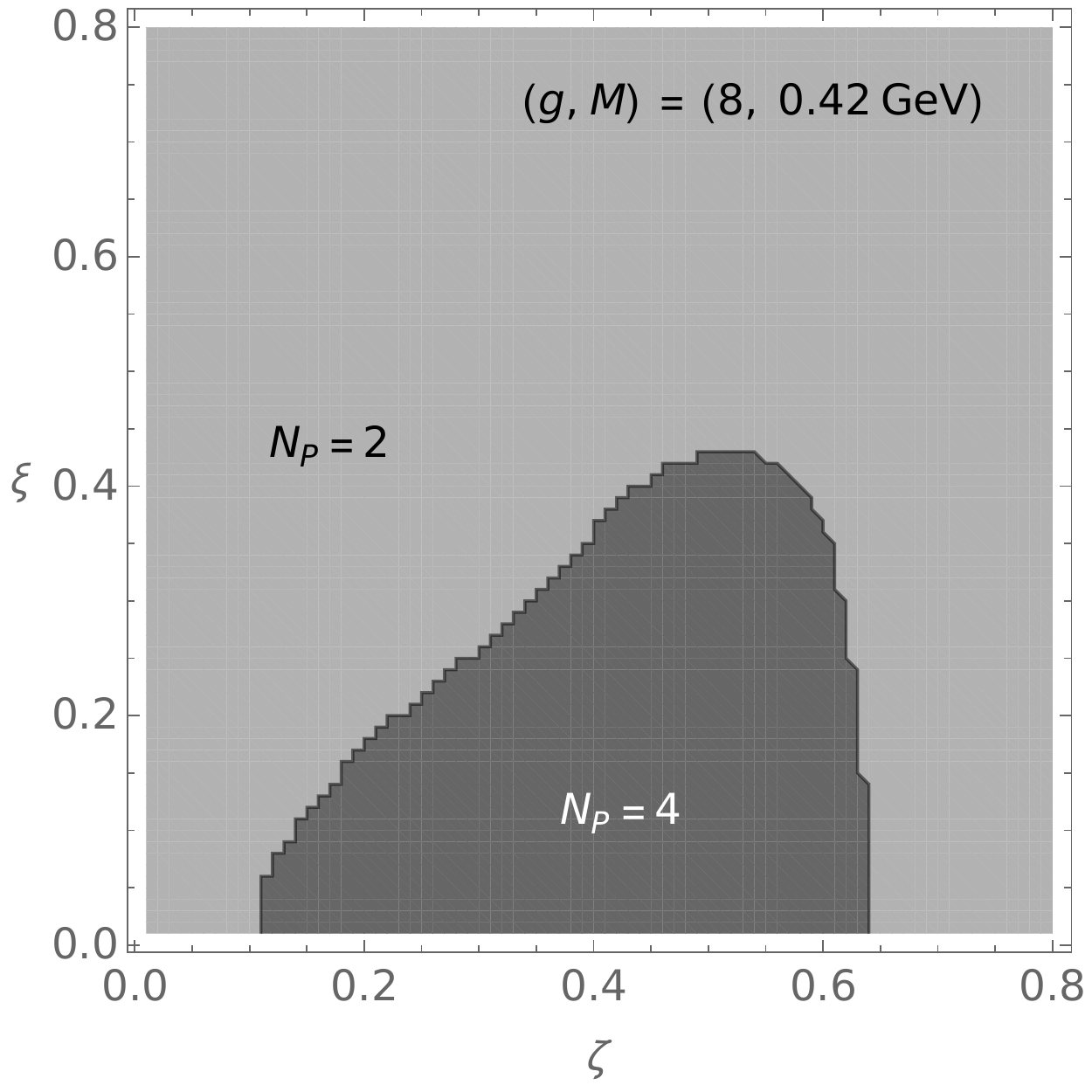}
  \end{center}
 \end{minipage}

 \caption{Contour plots of $N_W(C) = -N_P$ of the gluon propagator on $(\zeta = \frac{\mu_q^2}{M^2}, \xi = \frac{m_q^2}{M^2})$ plane at $g = 3$ (top) and $g = 8$ (bottom) for $M = 0.42$ GeV.
 }
    \label{fig:different_g}
\end{figure}

 \begin{figure}[t]
 
  \begin{minipage}{\hsize}
  \begin{center}
   \includegraphics[width=0.8\linewidth]{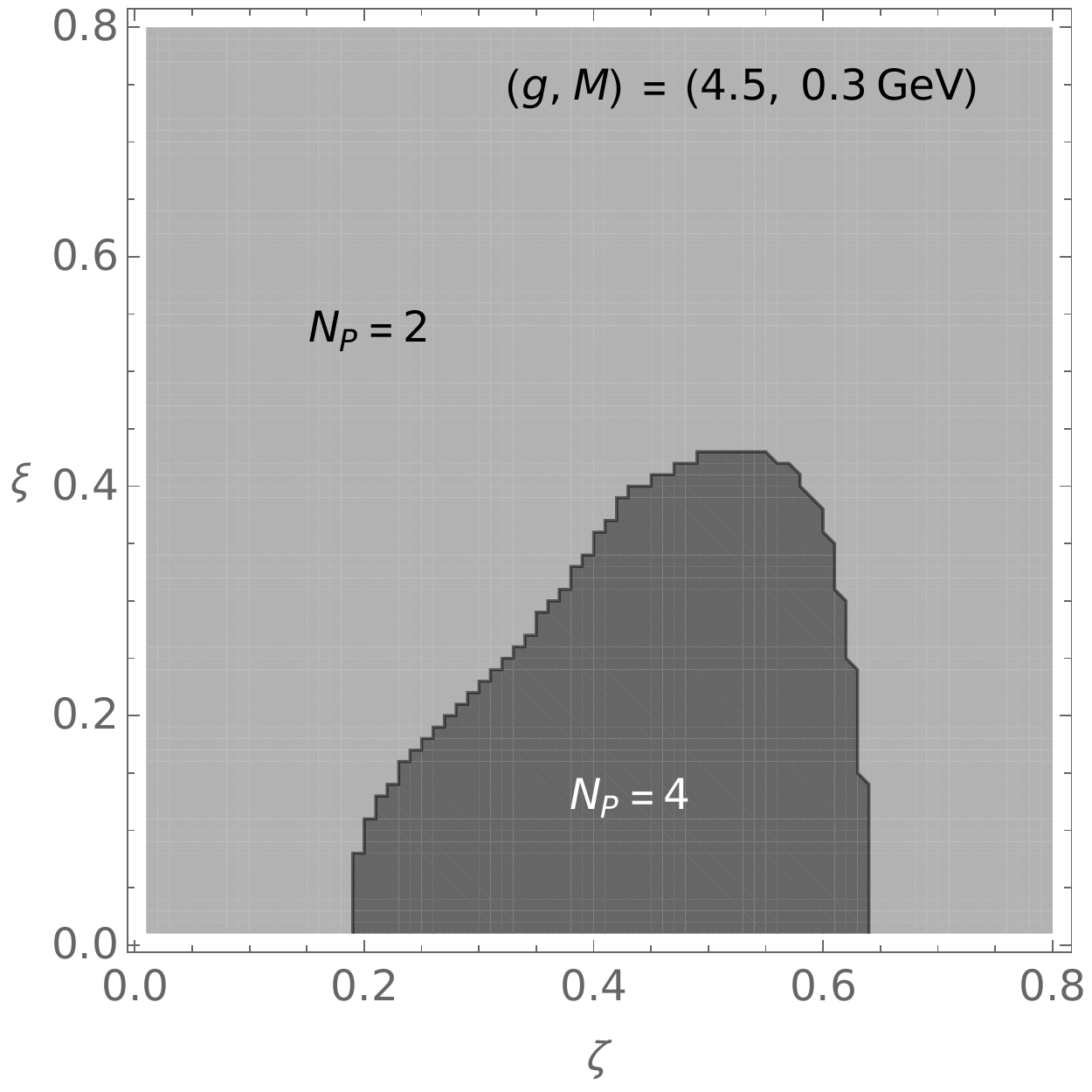}
  \end{center}
 \end{minipage}
 
  \begin{minipage}{\hsize}
  \begin{center}
   \includegraphics[width=0.8\linewidth]{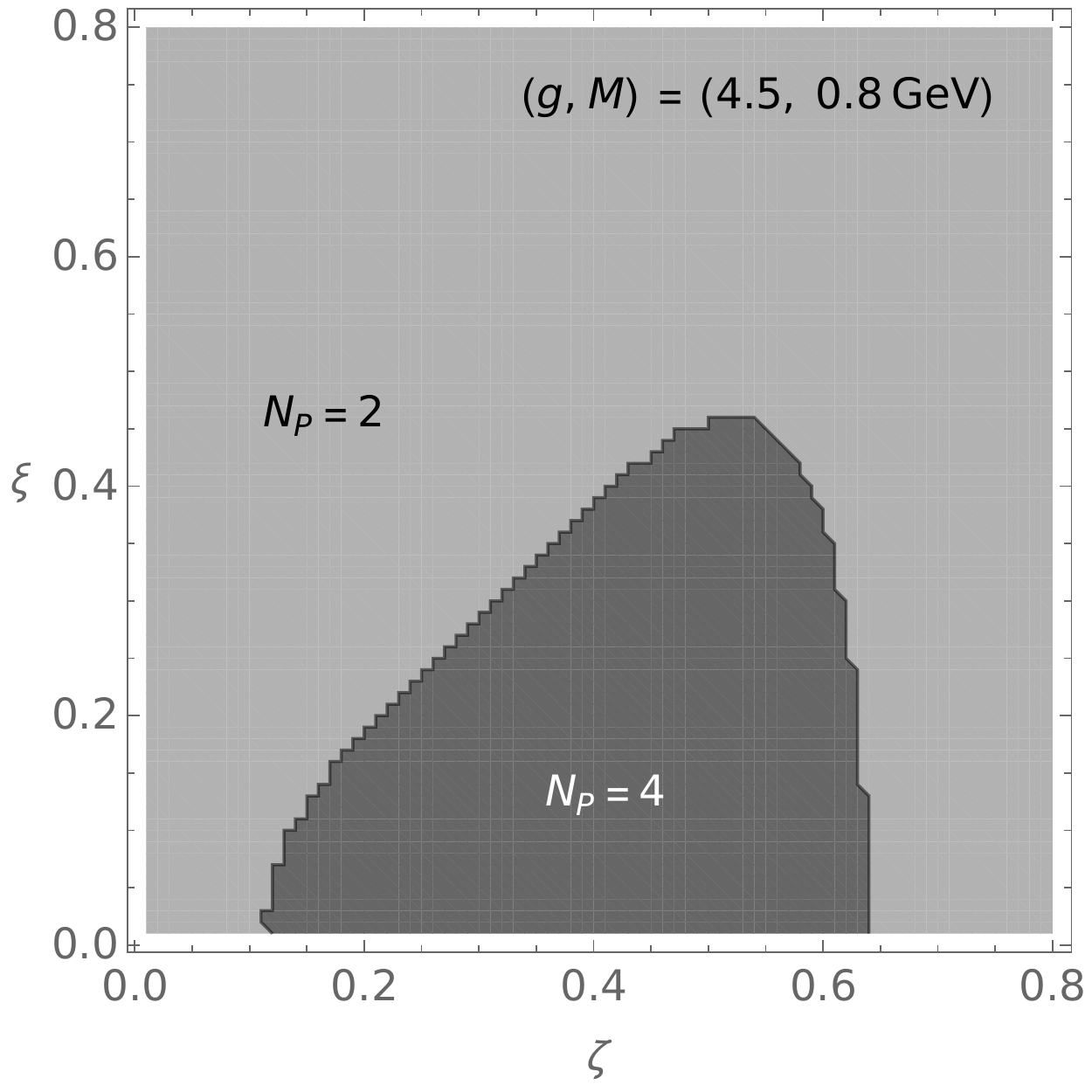}
  \end{center}
 \end{minipage}

 \caption{Contour plots of $N_W(C) = -N_P$ of the gluon propagator on $(\zeta = \frac{\mu_q^2}{M^2}, \xi = \frac{m_q^2}{M^2})$ plane at $M = 0.3$ GeV (top) and $M = 0.8$ GeV (bottom) for $g = 4.5$.}
    \label{fig:different_M}
\end{figure}


\begin{thebibliography}{99}

\bibitem{denseQCD}
K. Fukushima and T. Hatsuda, Rept. Prog. Phys. \textbf{74} (2011) [arXiv:1005.4814 [hep-ph]];
M. G. Alford, A. Schmitt, K. Rajagopal, and T. Sch\"afer, Rev. Mod. Phys. \textbf{80}, 1455 (2008) [	arXiv:0709.4635 [hep-ph]].

\bibitem{KRV2010}
A. Kurkela, P. Romatschke, and A. Vuorinen, Phys. Rev. D \textbf{81}, 105021 (2010) [arXiv:0912.1856 [hep-ph]].

\bibitem{TW10}
M. Tissier and N. Wschebor,
Phys.Rev. D \textbf{82}, 101701 (2010)
[arXiv:1004.1607 [hep-ph]].


\bibitem{TW11}
M. Tissier and N. Wschebor,
Phys.Rev. D \textbf{84}, 045018 (2011)
[arXiv:1105.2475 [hep-th]].






\bibitem{decoupling-analytical}
 Ph. Boucaud, J.P. Leroy, A. Le Yaouanc, J. Micheli, O. Pene, and J. 
Rodriguez-Quintero,
JHEP \textbf{06}, 099 (2008)
[arXiv:hep-ph/0803.2161];
 A.C. Aguilar, D. Binosi, and J. Papavassiliou,
Phys. Rev. D \textbf{78}, 025010 (2008)
[arXiv:0802.1870 [hep-ph]];
C.S. Fischer, A. Maas, and J.M. Pawlowski,
Ann. Phys.~(Amsterdam) \textbf{324}, 2408
 (2009)
[arXiv:0810.1987 [hep-ph]];
J. Braun, H. Gies, and J.M. Pawlowski,
 Phys. Lett. B \textbf{684}, 262 (2010)
[arXiv:0708.2413 [hep-th]];
 Ph. Boucaud, J. P. Leroy, A. L. Yaouanc, J. Micheli,
O. Pene, and J. Rodriguez-Quintero, 
Few-Body Syst. \textbf{53}, 387 (2012)
[arXiv:1109.1936 [hep-ph]].

\bibitem{decoupling-lattice}
I.L. Bogolubsky, E.-M. Ilgenfritz, M. M\"uller-Preussker, and A. Sternbeck,
Phys. Lett. B \textbf{676}, 69 (2009)
[arXiv:0901.0736 [hep-lat]];
 A. Cucchieri and T. Mendes,
Phys. Rev. D \textbf{78}, 094503 (2008)
[arXiv:0804.2371[hep-lat]];
A. Cucchieri and T. Mendes,
Phys. Rev. Lett. \textbf{100}, 241601 (2008)
[arXiv:0712.3517 [hep-lat]];
V. G. Bornyakov, V. K. Mitrjushkin, and M. M\"uller-Preussker,
Phys. Rev. D \textbf{81}, 054503 (2010);
O. Oliveira and P. J. Silva,
Phys. Rev. D \textbf{86}, 114513 (2012)
[arXiv:1207.3029 [hep-lat]].


\bibitem{CF76b}
G. Curci and R. Ferrari, Nuovo Cim. \textbf{A32}, 151 (1976); \textbf{A35} 1 (1976); \textbf{A47} 555(E) (1978).

\bibitem{Gubarev-Zakharov}

F.V. Gubarev, L. Stodolsky, and V.I. Zakharov,
Phys. Rev. Lett. \textbf{86}, 2220 (2001)
[arXiv:hep-ph/0010057];
F.V. Gubarev and V.I. Zakharov,
Phys. Lett.  \textbf{B501}, 28 (2001)
[arXiv:hep-ph/0010096].


\bibitem{Schaden-Kondo-Shinohara-Wershinke}
M. Schaden,
[arXiv:hep-th/9909011];
K.-I. Kondo and T. Shinohara,
Phys. Lett. B \textbf{491} 263 (2000)
[arXiv:hep-th/0004158];
K.-I. Kondo,
Phys. Lett. B \textbf{514}, 335 (2001)
[arXiv: hep-th/0105299];
M.  Warschinke,   R.  Matsudo,   S.  Nishino,   T.  Shinohara, and K.-I. Kondo,
Phys. Rev. D \textbf{97}, 034029 (2018); \textbf{98}, 059901(E) (2018).  [arXiv:1711.03276 [hep-th]].


\bibitem{Boucaud2000}
Ph. Boucaud, A. Le Yaouanc, J.P. Leroy, J. Micheli, O. Pene, and J. Rodriguez-Quintero
Phys. Lett. B \textbf{493}, 315 (2000) 
[arXiv:hep-ph/0008043].


\bibitem{Verschelde2001}
H. Verschelde, K. Knecht, K. Van Acoleyen, and M. Vanderkelen,
Phys. Lett. B \textbf{516}, 307 (2001)
[arXiv: hep-th/0105018].


\bibitem{BG2003}
R. E. Browne and J. A. Gracey,
JHEP \textbf{11} 029 (2003)
[arXiv:hep-th/0306200].

\bibitem{Gribov78}
V.N. Gribov,
Nucl. Phys. \textbf{B139}, 1 (1978).


\bibitem{ST12}
J. Serreau and M. Tissier,
 Phys. Lett. B \textbf{712}, 97 (2012).
[arXiv:1202.3432 [hep-th]]


\bibitem{Non-mass-ren}
D. Dudal, H. Verschelde, and S. P. Sorella, 
Phys. Lett. B \textbf{555}, 126 (2003) 	[arXiv:hep-th/0212182]; N. Wschebor, Int. J. Mod. Phys. \textbf{A23} 2961 (2008) [arXiv:hep-th/0701127]; M. Tissier and N. Wschebor, Phys. Rev. D \textbf{79}, 065008 (2009) [arXiv:0809.1880 [hep-th]].

\bibitem{Taylor71}
J. C. Taylor, 
Nucl. Phys. \textbf{B33}, 436 (1971).


\bibitem{RSTW17}
U. Reinosa, J. Serreau, M. Tissier, and N. Wschebor,
Phys. Rev. D \textbf{96}, 014005 (2017).
[arXiv:1703.04041 [hep-th]]

\bibitem{PTW13}
M. Pel\'aez, M. Tissier, and N. Wschebor,
Phys. Rev. D \textbf{88}, 125003 (2013). [arXiv:1310.2594 [hep-th]]

\bibitem{RSTW14}
U. Reinosa, J. Serreau, M. Tissier, and N. Wschebor,
Phys. Rev. D \textbf{89}, 105016 (2014).
[arXiv:1311.6116 [hep-th]]

\bibitem{GPRT19}
J. A. Gracey, M. Pel\'aez, U. Reinosa, and M. Tissier
Phys. Rev. D \textbf{100}, 034023 (2019)
[arXiv:1905.07262 [hep-th]]



\bibitem{PTW14}
M. Pel\'aez, M. Tissier, and N. Wschebor,
Phys. Rev. D \textbf{90}, 065031 (2014). [arXiv:1407.2005 [hep-th]]

\bibitem{PTW15}
M. Pel\'aez, M. Tissier, and N. Wschebor,
Phys. Rev. D \textbf{92} 045012 (2015). [arXiv:1504.05157[hep-th]]

\bibitem{PRSTW17}
M. Pel\'aez, U. Reinosa, J. Serreau, M. Tissier, and N. Wschebor,
Phys. Rev. D \textbf{96}, 114011 (2017).
[arXiv:1703.10288[hep-th]]


\bibitem{Finite-temperature-Landau-dewitt}
 U. Reinosa, J. Serreau, M. Tissier, and N. Wschebor,
Phys. Lett. B \textbf{742}, 61 (2015)
[arXiv:1407.6469 [hep-ph]];
U. Reinosa, J. Serreau, M. Tissier, and N. Wschebor,
Phys. Rev. D \textbf{91},  045035 (2015)
[arXiv:1412.5672 [hep-th]];
U. Reinosa, J. Serreau, and M. Tissier,
 Phys. Rev. D \textbf{92}, 025021 (2015)
 [arXiv:1504.02916 [hep-th]];
U. Reinosa, J. Serreau, M. Tissier, and N. Wschebor,
Phys. Rev. D \textbf{93}, 105002 (2016)
[arXiv:1511.07690 [hep-th]];
J. Maelger, U. Reinosa, and J. Serreau,
Phys. Rev. D \textbf{97}, 074027 (2018).
[arXiv:1710.01930 [hep-ph]].


\bibitem{Kondo13}
K.-I. Kondo,
Phys. Rev. D \textbf{87}, 025008 (2013)
[arXiv:1208.3521 [hep-th]];
K.-I. Kondo, K. Suzuki, H. Fukamachi, S. Nishino, and T. Shinohara,
Phys. Rev. D \textbf{87}, 025017 (2013)
[arXiv:1209.3994 [hep-th]].


\bibitem{Suenaga-Kojo19}
D. Suenaga and T. Kojo, Phys. Rev. D \textbf{100}, 076017 (2019) [arXiv:1905.08751 [hep-ph]].


\bibitem{BHMS19}
T. Boz, O. Hajizadeh, A. Maas, and J. I. Skullerud, 
Phys. Rev. D \textbf{99}, 074514 (2019).


\bibitem{KS2021}
T.~Kojo and D.~Suenaga,
Phys. Rev. D \textbf{103}, 094008 (2021)
[arXiv:2102.07231 [hep-ph]].



\bibitem{HK2018}
Y. Hayashi and K.-I. Kondo, Phys. Rev. D \textbf{99} 074001 (2019)
[arXiv:1812.03116 [hep-th]].


\bibitem{KWHMS19}
K.-I. Kondo, M. Watanabe, Y. Hayashi, R. Matsudo, and Y. Suda, Eur. Phys. J. C \textbf{80}, 84 (2020) [arXiv:1902.08894[hep-th]].


\bibitem{HK2020}
Y. Hayashi and K.-I. Kondo, Phys. Rev. D \textbf{101}, 074044 (2020)
[arXiv:2001.05987 [hep-th]].




\bibitem{Stingl86}
M. Stingl,
Phys. Rev. D \textbf{34}, 3863 (1986); \textbf{36}, 651(E) (1987).

\bibitem{Stingl96}
M. Stingl, 
Z. Phys. A \textbf{353}, 423 (1996). [arXiv:hep-th/9502157]

\bibitem{HKRSW}
U. H\"abel, R. K\"onning, H.G. Reusch, M. Stingl, and S. Wigard,
Z. Phys. A \textbf{336} 423 (1990);
U. H\"abel, R. K\"onning, H.G. Reusch, M. Stingl, and S. Wigard,
Z. Phys. A \textbf{336} 435 (1990).

\bibitem{Zwanziger90}
D.~Zwanziger,
Nucl. Phys. \textbf{B323}, 513 (1989)



\bibitem{DGSVV2008}
D. Dudal, J. A. Gracey, S. P. Sorella, N. Vandersickel, and H. Verschelde,
Phys. Rev. D \textbf{78}, 065047 (2008) 
[arXiv:0806.4348 [hep-th]].


\bibitem{Siringo16a}
F. Siringo,
Nucl.Phys. \textbf{B907}, 572 (2016).
[arXiv:1511.01015 [hep-ph]]


\bibitem{Siringo16b}
F. Siringo,
Phys. Rev. D \textbf{94}, 114036 (2016).
[arXiv:1605.07357 [hep-ph]]


\bibitem{BT2019}
D. Binosi and R.-A. Tripolt, 
Phys. Lett. B \textbf{801}, 135171 (2020).
[arXiv:1904.08172 [hep-ph]]

\bibitem{Falcao:2020vyr}
A.~F.~Falc\~ao, O.~Oliveira, and P.~J.~Silva,
Phys. Rev. D \textbf{102}, 114518 (2020)
[arXiv:2008.02614 [hep-lat]].

\bibitem{SFK12}
S. Strauss, C.S. Fischer, and C. Kellermann,
Phys. Rev. Lett. \textbf{109}, 252001 (2012).
[arXiv:1208.6239 [hep-ph]]

\bibitem{Fischer-Huber}
C.~S.~Fischer and M.~Q.~Huber,
Phys. Rev. D \textbf{102}, 094005 (2020)
[arXiv:2007.11505 [hep-ph]].


\bibitem{spectral_repr_UKKL}
 H. Umezawa and S. Kamefuchi,
 Prog. Theor. Phys. \textbf{6}, 543 (1951);
G. K\"all\'en, Helv. Phys. Acta. \textbf{25}, 417 (1952);
H. Lehmann,
 Nuovo Cimento \textbf{11}, 342 (1954).



\bibitem{Nakanishi72suppl}
N. Nakanishi,
Prog. Theor. Phys. Suppl. \textbf{51}, 1 (1972).

\bibitem{Maris}
P.~Maris,
Phys. Rev. D \textbf{52}, 6087 (1995)
[arXiv:hep-ph/9508323 [hep-ph]].



\bibitem{Lowdon}
P. Lowdon, Phys. Rev. D \textbf{96}, 065013 (2017)
[arXiv:1702.02954 [hep-th]];
P. Lowdon, Nucl. Phys. \textbf{B935}, 242 (2018)
[arXiv:1711.07569 [hep-th]];
S. W. Li, P. Lowdon, O. Oliveira, and P. J. Silva,
Phys. Lett. B \textbf{803}, 135329 (2020) [arXiv:1907.10073 [hep-th]].


\bibitem{Kapusta-Gale}
J. I. Kapusta and C. Gale, Finite-temperature Field
theory: Principles and Applications
(Cambridge University Press,  Cambridge, 2011).


\bibitem{HK2021}
Y. Hayashi and K.-I. Kondo, [arXiv:2103.14322 [hep-th]] (to be published in Phys. Rev. D).

\bibitem{Baym-Mermin}
G. Baym and N. D. Mermin, 
J. Math. Phys. (N.Y.) \textbf{2}, 236 (1961).


\bibitem{OZ80}
  R. Oehme and W. Zimmermann,
Phys. Rev. D \textbf{21}, 471 (1980);
Phys. Rev. D \textbf{21}, 1661 (1980).

\bibitem{DEGHMPPS19}
D.~Dudal, D.~M.~van Egmond, M.~S.~Guimarães, O.~Holanda, B.~W.~Mintz, L.~F.~Palhares, G.~Peruzzo and S.~P.~Sorella,
Phys. Rev. D \textbf{100}, 065009 (2019)
[arXiv:1905.10422 [hep-th]].

\bibitem{Linde80}
A.~D.~Linde,
Phys. Lett. B \textbf{96}, 289 (1980)


\bibitem{DOS16}
A. G. Duarte, O. Oliveira, and P. J. Silva, 
Phys. Rev. D \textbf{94}, 014502 (2016)
[arXiv:1605.00594 [hep-lat]].



\bibitem{Solana:2018pbk}
J.~Casalderrey-Solana, S.~Grozdanov, and A.~O.~Starinets,
Phys. Rev. Lett. \textbf{121}, 191603 (2018)
[arXiv:1806.10997 [hep-th]].


\bibitem{BDGHSVZ10}
L. Baulieu, D. Dudal, M.S. Guimaraes, M.Q. Huber, S.P. Sorella, N. 
Vandersickel, and D. Zwanziger,
Phys. Rev. D \textbf{82}, 025021 (2010)
[arXiv:0912.5153 [hep-th]].

\bibitem{Titchmarsh}

E.C. Titchmarsh, The Theory of Functions (Oxford University Press, Oxford, 1939). \\









\end{thebibliography}
\end{document}